\newtheorem{theorem}{Theorem}[section]
\newtheorem{lemma}[theorem]{Lemma}
\newtheorem{corollary}[theorem]{Corollary}
\theoremstyle{definition}
\newtheorem{example}[theorem]{Example}
\newtheorem{remark}[theorem]{Remark}
\newcommand\F{\mathbb{F}} 
\newcommand\Z{\mathbb{Z}}
\newcommand\C{\mathcal{C}}
\newcommand\lcm{\mathrm{lcm}} 
\newcommand\ord{\mathrm{ord}}
\numberwithin{equation}{section}
\numberwithin{table}{section}
\begin{document}
  
\title{Constacyclic codes with best-known parameters} 

\author{Zekai Chen and Min Sha}

\date{}

\maketitle

\abstract{
In this paper, we construct several infinite families of $q$-ary
constacyclic codes over a finite field $\F_q$ with length $n$, dimension around $n/2$, and minimum distance at least $cn/\log_q n$ for some positive constant $c$. They contain many constacyclic codes with optimal, or almost-optimal, or best-known parameters. 
We also consider constacyclic codes of various lengths. 
}

\let\thefootnote\relax\footnote{
Z. Chen's research was supported by the Scientific Research Innovation Project of Graduate School of South China Normal University.
M. Sha's research was supported by the Guangdong Basic and Applied Basic Research Foundation (No. 2025A1515010635). 

Zekai Chen and Min Sha are with School of Mathematical Sciences, South China Normal University, Guangzhou, Guangdong, 510631, China. (email: chenzk@m.scnu.edu.cn, min.sha@m.scnu.edu.cn)
}

\section{Introduction}

\subsection{Background} 
Let $\F_q$ be the finite field with $q$ elements and $\F_q^* = \F_q \setminus \{0\}$. 
A $q$-ary $[n,k,d]$ linear code $\C$ is a $k$-dimensional linear subspace of $\F_q^n$ with minimum distance $d$. 
Here, $n$ is called the length of $\C$. 
The code $\C$ is said to be distance-optimal (or, optimal, for short)
if there is no $q$-ary $[n, k,\ge d+1]$ linear code. 
In addition, $\C$ is said to be distance almost-optimal (or, almost-optimal, for short)
if there is a  distance-optimal $q$-ary $[n, k,d+1]$ linear code. 
Moreover,  $\C$ is said to be best-known
if $d$ is the maximal minimum distance of all known $q$-ary linear codes with length $n$ and dimension $k$. 

Let $\lambda \in \F_q^*$. 
The code $\C$ is called a $\lambda$-constacyclic code if $(\lambda c_{n-1}, c_0, \ldots, c_{n-2})\in \C$ 
for any $(c_0,c_1, \ldots, c_{n-1}) \in \C$. Define the map 
\begin{align*}
& \phi: \, \F_q^n \to \F_q[x]/\langle x^n - \lambda \rangle \\
& (c_0,c_1, \ldots, c_{n-1}) \mapsto c_0 + c_1 x +  \cdots + c_{n-1} x^{n-1}.
\end{align*}
It is well-known that every ideal of the residue class ring $\F_q[x]/\langle x^n - \lambda \rangle$ is principal. 
Then, the code $\C$ is called a $\lambda$-constacyclic code if and only if $\phi(\C)$ is an ideal of the ring $\F_q[x]/\langle x^n - \lambda \rangle$. 
Moreover, we identify $\C$ with $\phi(\C)$, and then we can write 
 $\C = \langle g(x) \rangle$, where $g(x)$ is monic and has the smallest degree among all the non-zero codewords of $\C$. 
Here, $g(x)$ is in fact a factor of $x^n - \lambda$ and is called the generator polynomial of $\C$. 

In particular, when $\lambda =1$, $\lambda$-constacyclic codes are classical cyclic codes;
 and when  $\lambda =-1$ and $q$ is odd, $\lambda$-constacyclic codes are negacyclic codes. 
We refer to \cite{HP, MS, PW} for more details about linear codes and cyclic codes. 

Constacyclic codes have attracted extensive attentions due to their performance and applications. 
For example,  several new constacyclic codes that improve the minimum distance
of currently best known linear codes were found in \cite{AAHP, AM}, 
 and several infinite classes of MDS constacyclic codes were constructed in \cite{DP, KS, WDLZ}. 
In addition, constacyclic codes have important applications 
in constructing other kinds of codes, such as locally
repairable codes \cite{CFXF}, symbol-pair codes \cite{CLL},  and quantum codes \cite{CLZ, KZL}.
We refer to \cite{SDW} for more details and references about constacyclic codes.

\subsection{Related works}

Constructing cyclic codes or more general constacyclic codes with optimal or best-known parameters 
is interesting and important. 
Many optimal or best-known  codes over small fields with short lengths are cyclic codes; see \cite{Grassl}. 
Some infinite families of optimal ternary cyclic codes with length $3^m -1$
and minimum distance four were constructed in \cite{DH}. 
Several other infinite families of optimal ternary cyclic codes with length $3^m -1$
and minimum distance four or five were constructed in \cite{LLHDT}. 
Some infinite families of optimal ternary negacyclic codes with minimum distance four or five were constructed in \cite{SD}. 
Infinitely many families of optimal binary cyclic codes with
minimum distance six and an infinite family of quaternary cyclic codes with minimum
distance four were constructed in \cite{CW}. 
The first infinite family of binary optimal BCH codes
with minimum distance 8 was constructed in \cite{XCY}. 
 Three infinite families of optimal repeated-root cyclic codes with minimum distance three or four were constructed 
in \cite{CXD}. 
We refer to \cite{CW, CXD, DDR, LC, SD, WYZZ, XCY, ZHLC, ZKZL} and the references therein for more optimal or best-known codes constructed as cyclic, negacyclic, or quasi-cyclic codes. 

On the other hand, it is a long-standing open problem in coding theory that whether 
there exists an asymptotically good sequence of cyclic codes (see \cite{MW}).
So, it is interesting to construct infinite families of cyclic, negacyclic
or constacyclic codes whose dimensions and minimum distances are both large. 

In recent years, researchers have paid much attention in constructing infinite families of such codes with parameters $[n, k, d]$ 
such that $k$ is around $n/2$ and $d$ is at least $c\sqrt{n}$ for some positive constant $c$ depending only on $q$  
(this lower bound is called \textit{the square-root-like lower bound}). 

Some binary quadratic residue codes with  parameters $[n,(n+1)/2, d]$, where $d \ge \sqrt{n}$ and $n$ is a prime with $n \equiv \pm 1 \pmod{8}$, were constructed in   \cite[Section 6.6]{HP}. 
Two infinite families of binary cyclic codes with length $2^{m}-1$ and dimension near $2^{m-1}$ and 
with a square-root-like lower bound on the minimum distance were constructed in \cite{TD}. 
Recently, the work in \cite{TD} was extended to the 4-ary case in \cite{STKS} and the $2^s$-ary case in \cite{LSTS}. 
We refer to \cite{LGLS, LLD, LQS, Sun, SL} for more infinite families of binary cyclic codes with a square-root-like lower bound, 
\cite{CDLS, LQS} for infinite families of ternary cyclic codes with a square-root-like lower bound, 
and  \cite{CD, SWD, XL, ZKL} for  infinite families of  $q$-ary cyclic codes with a square-root-like lower bound. 
In addition, several infinite families of  $q$-ary negacyclic codes and constacyclic codes were 
constructed in \cite{CSXCD, SDW, XCDS} with a square-root-like lower bound.

In particular, in \cite{SLD} the first infinite family of binary $[n, (n+1)/2, d]$ cyclic codes  was constructed  with minimum distance $d$ provably much better than the square-root-like bound, where $n=2^p -1$, $p$ is an odd prime and $d \ge cn/\log_2 n$ for some positive constant $c$. 
Roughly speaking, the defining set of these codes is the union of the first half of cyclotomic cosets with size $p$. 

Motivated by \cite{SLD}, several works have been done recently for constructing more $q$-ary cyclic or negacyclic codes 
with a lower bound like $cn/\log_q n$ on their minimum distances.  
We summarize them as follows: 
\begin{itemize}
\item  In \cite{CW}, one infinite family of $q$-ary $[n, (n+1)/2,  \ge cn/\log_q n]$ negacyclic  codes with $q \equiv 3 \pmod{4}$ and $n=\frac{q^p-1}{q-1}$ was constructed, and one infinite family of $q$-ary negacyclic $[n, n/2,  \ge cn/\log_q n]$ codes with $q \equiv 1 \pmod{4}$ and $n=\frac{q^p-1}{2}$ was also constructed. 

\item In \cite{CXD},  two infinite families of repeated-root binary cyclic codes with parameters
 $[2n,k,\ge (n-1)/\log_2 n]$, where $n = 2^p-1$ and $k \ge n$, were constructed.

\item In \cite{LGLS}, two infinite families of binary $[n, (n+1)/2, \ge cn/\log_2 n]$ cyclic codes 
with $n=2^{p^2}-1$ or $2^{p_1p_2}-1$ were constructed, 
where $p_1$ and $p_2$ are two distinct odd primes. 

\item In \cite{LWS}, an improvement of the lower bound on the minimum distance in \cite{SLD} was given,  
and an infinite family of ternary $[n, (n+2)/2,  \ge cn/\log_3 n]$ cyclic codes 
was constructed, where $n=3^p-1$. 

\item  In \cite{SL}, an infinite family of binary $[2^m-1,k, \ge \lfloor 2^{m-1}/m \rfloor+2]$ cyclic codes 
was constructed for any integer $m \ge 3$, where $ k \ge 2^{m-1}-1$.  
\end{itemize}

Most recently, an infinite family of binary BCH codes was constructed in \cite{XCY} 
with length $n=(2^{p_1}-1) \cdots (2^{p_s}-1)$, 
dimension at least $(n+1)/2$, and minimum distance at least $\lceil (n-1)/(p_1 \cdots p_s) \rceil$, 
where $p_1, \ldots, p_s$ are distinct primes. 
We also refer to \cite{NLJM} for BCH codes with minimum distances
propotional to code lengths and refer to \cite{DL} for a recent survey on BCH codes.

\subsection{Our contributions}

In this paper, motivated by \cite{SLD} and generalizing the constructions in \cite{CW, CXD, LGLS, LWS, SL, SLD}, we propose the following general construction for $q$-ary constacyclic codes: 
 first we classify the cyclotomic cosets according to their sizes 
(that is, two cosets are in the same class if and only if they have the same size), and then 
we take the defining set as the union of the first half of cosets in each class 
(we sort the cosets in ascending order by their coset leaders and use the ceiling function or the floor function to achieve ``the first half"). 

In order to obtain good lower bounds for minimum distance, in this paper we focus 
on the case when there are exactly two kinds of cyclotomic cosets.  
Even in this case, the codes we construct are not always BCH constacyclic codes; see Example~\ref{ex:not-BCH}. 

Let $N$ be the number of cyclotomic cosets with maximum size. 
In the above construction, one may guess that the first $\lfloor N/2 \rfloor$ consecutive elements
are in the defining set. 
However, this does not always hold; see Example~\ref{ex:deltaN1Nm}.

The main contributions of this paper are as follows: 
\begin{itemize}
\item 
We establish the lower bound $\lfloor \frac{qN}{2(q-1)} \rfloor$ on minimum distances  
for $q$-ary constacyclic codes constructed by using the ceiling function, and the same lower bound in many situations for those codes constructed by using the floor function (see Theorem~\ref{thm:NlNm}). 
This improves the lower bounds on minimum distances in \cite[Theorems 10, 15 and 16]{CW} for some infinite families of $q$-ary negacyclic codes mentioned above. 
Moreover, for some special cases, we obtain better lower bounds on minimum distances (see Theorem~\ref{thm:NlNm2}).

\item 
We consider $q$-ary constacyclic codes of various lengths $n$: $\frac{q^p -1}{rs}$, or $\Phi_{p^{b}}(q)$, 
 or $\Phi_{p_1 p_2}(q)$, where  $p$ is a prime,  $r$ is the multiplicative order of $\lambda$, $s$ and $b$ are two positive integers, $p_1$ and $p_2$ are two distinct primes, 
and $\Phi_m(x)$ is the $m$-th cyclotomic polynomial for any integer $m \ge 1$. 
We obtain many constacyclic codes with optimal, or almost-optimal, or best-known parameters (according to \cite{Grassl}). 
We also present several codes which are better than BCH constacyclic codes.
\end{itemize}

The rest of this paper is organized as follows. In Section~\ref{sec:pre}, we present some preliminary results. 
In Section~\ref{sec:cons}, we present some general constructions for infinite families of constacyclic codes. 
In Section~\ref{sec:qp}, we construct two infinite families of constacyclic codes with length $n=\frac{q^p-1}{rs}$ for any positive integers $r, s$ such that $r \mid q-1$ and $s \mid \frac{q^p -1}{r}$. 
In Section~\ref{sec:pbq}, we construct several infinite families of constacyclic codes with length $n=\Phi_{p^b}(q)$.  
In Section~\ref{sec:p1p2}, we construct two infinite families of constacyclic codes with length $n=\Phi_{p_1p_2}(q)$.

\section{Preliminaries}   \label{sec:pre}

\subsection{Cyclotomic cosets}
Let $q$ be a power of prime,
$n$ be a positive integer with $\gcd(n,q)=1$, and 
$r \mid q-1$. 
Let 
$$
\Z_{nr}=\{1,2, \ldots, nr\}
$$
be the ring of integers modulo $nr$. 
Here we emphasize that the first element in $\Z_{nr}$ is 1 but not 0, 
and we need this to unify the statements about cyclic codes and constacyclic codes. 

For any $i \in \Z_{nr}$,
the $q$-cyclotomic coset of $i$
modulo $nr$ is defined by 
$$
C_i^{(q,nr)}=\{ i q^j \mod nr : \, 0 \leq j \leq l_i-1\} \subseteq \Z_{nr},
$$
where $l_i$ is the least positive integer such that $i q^{l_i} \equiv i \pmod{nr}$, that is, 
$$
l_i = |C_i^{(q,nr)}|.  
$$ 
For simplicity, in the sequel we denote $C_i = C_i^{(q,nr)}$. 

The smallest integer in $C_i$ is called the coset leader of $C_i$. Let
$\Gamma_{(q,n,r)}$ be the set of all the coset leaders, 
$$
\Gamma_{(q,n,r)}^{(1)} = \{i \in \Gamma_{(q,n,r)}: \, i \equiv 1 \pmod{r}\}, 
$$
and 
$$
Z_{n,r} = \{1+ir: \, 0 \le i \le n-1\} \subseteq \Z_{nr}. 
$$
Clearly, $Z_{n,1}= \Z_n$ if $r=1$ (this corresponds to cyclic codes), and 
$$
Z_{n,r} = \bigcup_{i \in \Gamma_{(q,n,r)}^{(1)}} C_i. 
$$
We also remark that the case $r=2$  corresponds to negacyclic codes.

Let $\beta$ be a primitive $nr$-th roots of unity
and $\lambda=\beta^{n}$. 
Then, $\lambda$ is an element in $\F_q$ with multiplicative order $r$. 
For any $i \in \Z_{nr}$,
let $M_{\beta^i}(x)$ denote the minimal polynomial of $\beta^i$ over $\F_q$,
then 
$$
M_{\beta^i}(x)=\prod_{j \in C_i} (x-\beta^j) \in \F_q[x].
$$
Besides,
$$
x^n-\lambda = x^n - \beta^n = \prod_{j \in Z_{n,r}} (x-\beta^j) = \prod_{i \in \Gamma_{(q,n,r)}^{(1)}} M_{\beta^i} (x) .
$$
Then, for a $\lambda$-constacyclic code $\C =  \langle g(x) \rangle$ with $g(x) \mid x^n - \lambda$,  
the set 
$$
T(\C) = \{i\in Z_{n,r}: \, g(\beta^i)=0 \} 
$$
is called the defining set of $\C$ with respect to $\beta$. 
In addition, the Bose distance of $\C$ is defined to be one plus the largest number of consecutive elements from $Z_{n,r}$ and  contained in $T(\C)$ (see \cite{DDZ} for some work about the Bose distance of BCH codes).

For any positive integer $m$ with $\gcd(q, m)=1$, let $\ord_m (q)$ be the the multiplicative
order of $q$ modulo $m$. 
The result in the following lemma is somehow well-known (for instance, see \cite[page 3]{WCM}).

\begin{lemma}\label{lem:liord}
  For any $i \in Z_{n, r}$,
  we have $l_i \mid \ord_{nr}(q)$. 
In particular, if we write $nr=(q^m -1)/s$ for some positive integers $m, s$, then we have $l_i \mid m$ for each $i \in Z_{n, r}$. 
\end{lemma}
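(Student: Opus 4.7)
The plan is to verify the two statements by unwinding the definition of $l_i$ together with a standard minimality argument, and then to note that the second claim is an immediate specialisation of the first.

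For the first part, I will start from the defining property of $l_i$: it is the least positive integer with $i q^{l_i}\equiv i\pmod{nr}$, equivalently $nr \mid i(q^{l_i}-1)$. Let $M=\ord_{nr}(q)$, so $q^{M}\equiv 1\pmod{nr}$ and in particular $iq^{M}\equiv i\pmod{nr}$. I would then perform Euclidean division $M=l_i a+t$ with $0\le t<l_i$ and compute
$$
i q^{M} = i (q^{l_i})^{a} q^{t} \equiv i q^{t}\pmod{nr},
$$
where the congruence uses $iq^{l_i}\equiv i\pmod{nr}$ iterated $a$ times. Combined with $iq^{M}\equiv i\pmod{nr}$, this yields $i q^{t}\equiv i\pmod{nr}$; the minimality of $l_i$ forces $t=0$, hence $l_i\mid M$, which is exactly $l_i\mid \ord_{nr}(q)$.

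For the ``in particular'' part, assuming $nr=(q^{m}-1)/s$ with $s$ a positive integer, the identity $s\cdot nr = q^{m}-1$ gives $q^{m}\equiv 1\pmod{nr}$, and hence $\ord_{nr}(q)\mid m$. Combining this with the divisibility $l_i\mid \ord_{nr}(q)$ already established yields $l_i\mid m$, as claimed.

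There is essentially no obstacle here: both assertions are routine consequences of the definition of multiplicative order and of the cyclotomic coset size. The only small point to be careful about is that the argument uses only $i\in \Z_{nr}$ (via $nr\mid i(q^{l_i}-1)$) and does not require the finer structure $i\in Z_{n,r}$, so the statement holds uniformly for all cosets in question.
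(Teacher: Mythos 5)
Your argument is correct: the Euclidean-division argument showing $l_i\mid\ord_{nr}(q)$ and the observation that $s\cdot nr=q^m-1$ forces $\ord_{nr}(q)\mid m$ are exactly the standard route. The paper itself gives no proof (it cites the result as well-known from \cite{WCM}), so there is nothing to diverge from; your write-up supplies the expected argument, and your closing remark that only $i\in\Z_{nr}$ is needed is accurate.
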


For any positive integer $l$, let $N_l^{(q,n,r)}$ be the number of $q$-cyclotomic cosets modulo $nr$ 
whose coset leaders are in $\Gamma_{(q,n,r)}^{(1)}$ and sizes are equal to $l$, that is, 
$$
N_l^{(q,n,r)} = \Big| \big\{C_i: \, |C_i|=l, i \in Z_{n,r} \big\} \Big|.
$$
By Lemma~\ref{lem:liord}, if $l \nmid \ord_{nr}(q)$, then $N_l^{(q,n,r)} = 0$.

In the sequel, for simplicity and without confusion we denote 
\begin{align*}
N_l = N_l^{(q,n,r)}, \quad l \ge 1.
\end{align*}

We recall two lemmas in \cite[Lemmas 4 and 5]{WCM},
which are useful for determining $N_l$.

\begin{lemma}[\cite{WCM}] \label{lem:limidliff2} 
For any $i \in Z_{n,r}$ and any positive integer $l$, we have 
 $$l_i \mid l \iff \frac{nr}{\gcd(nr,q^l-1)} \, \Big| \, i .$$
\end{lemma}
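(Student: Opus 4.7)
The plan is to unwind the definition of $l_i$ into a pure divisibility condition on $nr$, then manipulate that condition by dividing through by an appropriate gcd. Nothing deep is going on, but I would take care to justify each equivalence so the chain is reversible.

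First I would observe that, by the minimality of $l_i$ as the least positive integer with $i q^{l_i}\equiv i\pmod{nr}$, the set of positive integers $l$ satisfying $iq^l\equiv i\pmod{nr}$ is precisely the set of positive multiples of $l_i$. Hence
$$
l_i \mid l \iff i q^l \equiv i \pmod{nr} \iff nr \mid i(q^l-1).
$$
This is the only place where the definition of $l_i$ really enters; the rest is arithmetic in $\Z$.

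Next I would set $d=\gcd(nr,q^l-1)$ and rewrite $nr \mid i(q^l-1)$ as
$$
\frac{nr}{d} \ \Big| \ i\cdot \frac{q^l-1}{d}.
$$
Since $\gcd\!\left(\frac{nr}{d},\frac{q^l-1}{d}\right)=1$ by the definition of $d$, the factor $\frac{q^l-1}{d}$ can be cancelled, yielding $\frac{nr}{d} \mid i$, which is exactly the claimed right-hand side. Reading these equivalences both ways gives the lemma.

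The only step that requires any care is the coprimality cancellation: one has to note that $\gcd(nr/d,(q^l-1)/d)=1$ holds because $d$ is the full gcd, and then invoke the standard fact that if $a\mid bc$ and $\gcd(a,c)=1$ then $a\mid b$. This is the ``main obstacle'' only in the sense that forgetting the coprimality is the one way to fumble a three-line argument; there is no substantive difficulty, and no additional hypothesis on $i\in Z_{n,r}$ is actually used beyond $i$ being an integer.
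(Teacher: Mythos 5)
Your proof is correct. The paper gives no proof of this lemma at all---it is quoted directly from \cite{WCM}---so there is nothing internal to compare against; your chain ($l_i\mid l \iff nr\mid i(q^l-1)$, divide through by $d=\gcd(nr,q^l-1)$, cancel the coprime factor $(q^l-1)/d$) is the standard argument and every equivalence is justified. The only point worth making explicit is why the exponents $l$ with $iq^l\equiv i\pmod{nr}$ form exactly the set of positive multiples of $l_i$: the set is closed under positive differences because $iq^b\equiv i$ gives $iq^a\equiv iq^{a-b}$, which combined with $iq^a\equiv i$ yields $iq^{a-b}\equiv i$, so the usual division-with-remainder argument applies.
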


\begin{lemma}[\cite{WCM}] \label{lem:Nl2consta}
  For each $l \mid \ord_{nr}(q)$,
  we have 
  $$
  N_l = \Big(\sum_{j \mid l} \mu(l/j) \sigma(j) \gcd(n,(q^j-1)/r) \Big)/l,
  $$
  where $\mu$ is the M\"{o}bius function and
  $$
    \sigma(j)=\begin{cases}
    0,& \gcd(r,\frac{nr}{\gcd(nr,q^j-1)})>1,\\
    1,& \gcd(r, \frac{nr}{\gcd(nr,q^j-1)})=1.
  \end{cases}
  $$
\end{lemma}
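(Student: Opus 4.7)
The plan is to count, for each $l \mid \ord_{nr}(q)$, the set
$$A_l = \{ i \in Z_{n,r} : l_i \mid l \},$$
then extract $N_l$ by Möbius inversion on the divisor lattice.

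First, I would apply Lemma~\ref{lem:limidliff2}: an element $i \in Z_{n,r}$ satisfies $l_i \mid l$ if and only if $d_l \mid i$, where $d_l := nr/\gcd(nr, q^l-1)$. So I need to count multiples of $d_l$ inside $Z_{n,r} = \{1+jr : 0 \le j \le n-1\}$, i.e., multiples of $d_l$ in $\Z_{nr}$ that are congruent to $1 \pmod{r}$. The total number of multiples of $d_l$ in $\Z_{nr}$ is $nr/d_l = \gcd(nr, q^l-1)$; these are $d_l, 2d_l, \ldots, (nr/d_l)d_l$, forming an arithmetic progression with common difference $d_l$.

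Second, I would split into the two cases of $\sigma(l)$. If $\gcd(r, d_l) > 1$, then the residues $k d_l \pmod r$ never hit $1$, so $|A_l| = 0$, matching $\sigma(l) = 0$. If $\gcd(r, d_l) = 1$, then $k d_l \pmod r$ cycles through all residues modulo $r$ equally often, so exactly $\gcd(nr, q^l-1)/r$ of them lie in $Z_{n,r}$. Moreover, $\gcd(r, d_l) = 1$ forces $r \mid \gcd(nr, q^l-1)$, hence $r \mid q^l - 1$; writing $q^l - 1 = rt$ gives
$$\frac{\gcd(nr, q^l-1)}{r} = \frac{\gcd(nr, rt)}{r} = \gcd\!\left(n, \tfrac{q^l-1}{r}\right).$$
Combining both cases yields
$$|A_l| = \sigma(l)\,\gcd\!\left(n, \tfrac{q^l-1}{r}\right).$$

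Third, since $A_l$ is the disjoint union of all cyclotomic cosets $C_i^{(q,nr)}$ with $l_i \mid l$ and $i \in Z_{n,r}$, and each such coset of size $j$ contributes $j$ elements,
$$|A_l| = \sum_{j \mid l} j\,N_j = \sigma(l)\,\gcd\!\left(n, \tfrac{q^l-1}{r}\right).$$
Möbius inversion on divisors of $l$ then gives the stated formula. The only subtle point — and the one I would flag as the main obstacle — is the arithmetic progression counting together with the gcd simplification in step two, where one must carefully verify that $\sigma(l) = 1$ forces $r \mid q^l - 1$ so that the quotient $(q^l-1)/r$ in the final expression makes sense.
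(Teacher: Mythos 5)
Your argument is correct. Note, however, that the paper itself gives no proof of this lemma: it is imported verbatim from \cite[Lemma 5]{WCM}, so there is no in-paper proof to compare against. Your route is the natural one, and it is worth pointing out that your intermediate identity $|A_l|=\sum_{j\mid l}jN_j=\sigma(l)\gcd(n,(q^l-1)/r)$ is exactly the paper's Lemma~\ref{lem:sumlNl}, which the authors instead \emph{derive from} the present lemma by M\"obius inversion; you prove that identity directly (via Lemma~\ref{lem:limidliff2} and the arithmetic-progression count) and then invert in the opposite direction, so your write-up in effect supplies a self-contained proof of both statements. All the delicate points check out: in the case $\gcd(r,d_l)>1$ a multiple of $d_l$ cannot be $\equiv 1\pmod r$; in the case $\gcd(r,d_l)=1$ the exactness of the count $\gcd(nr,q^l-1)/r$ uses $r\mid\gcd(nr,q^l-1)$, and the simplification $\gcd(nr,rt)/r=\gcd(n,t)$ is the standard identity $\gcd(ra,rb)=r\gcd(a,b)$. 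The one point you flag as subtle is actually automatic: $r\mid q^l-1$ holds unconditionally because $r\mid q-1$ and $q-1\mid q^l-1$, so the quotient $(q^l-1)/r$ makes sense for every $j\mid l$ regardless of the value of $\sigma$. Finally, the M\"obius inversion is legitimate at $l$ because the identity $|A_j|=\sum_{k\mid j}kN_k$ holds at every divisor $j$ of $l$.
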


From Lemma~\ref{lem:Nl2consta}  and  using the M\"{o}bius inversion formula, 
we directly get the following formula about 
the number of elements $i \in Z_{n,r}$ with $l_i$ dividing a given integer. 

\begin{lemma}\label{lem:sumlNl}
  For any $l \mid \ord_{nr}(q)$,
  we have
  $$
  \sum_{j \mid l} j N_j =\begin{cases}
    0,& \gcd(r,\frac{nr}{\gcd(nr,q^l-1)})>1,\\
    \gcd(n,(q^l-1)/r), &\gcd(r,\frac{nr}{\gcd(nr,q^l-1)})=1.
  \end{cases}
  $$
\end{lemma}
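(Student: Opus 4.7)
My plan is to derive this identity via Möbius inversion applied to Lemma~\ref{lem:Nl2consta}. For every divisor $j$ of $\ord_{nr}(q)$, define $F(j) = jN_j$ and $G(j) = \sigma(j)\gcd(n,(q^j-1)/r)$. Lemma~\ref{lem:Nl2consta} states precisely that $F(l) = \sum_{j \mid l} \mu(l/j) G(j)$ for every $l \mid \ord_{nr}(q)$. Möbius inversion then yields $G(l) = \sum_{j \mid l} F(j)$, which, after unpacking the definition of $\sigma(l)$ into its two cases, is exactly the claimed formula.

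As a sanity check (and an alternative route), I would also present a combinatorial interpretation. The left-hand side $\sum_{j \mid l} j N_j$ counts the indices $i \in Z_{n,r}$ with $l_i \mid l$. By Lemma~\ref{lem:limidliff2}, this equals the number of $i \in Z_{n,r}$ divisible by $d := nr/\gcd(nr,q^l-1)$. Writing $i = 1 + kr$ with $0 \le k \le n-1$, divisibility $d \mid 1+kr$ forces $\gcd(d,r) \mid 1$, which immediately explains the vanishing when $\gcd(d,r) > 1$. When $\gcd(d,r) = 1$, we have $d \mid n$ (because $d \mid nr$), and the congruence $kr \equiv -1 \pmod d$ has a unique solution modulo $d$, so the number of admissible $k$ is exactly $n/d$.

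The final bookkeeping step is the numerical identity $n/d = \gcd(nr, q^l-1)/r = \gcd(n, (q^l-1)/r)$, which I expect to be the only place where one must pause. It relies on $r \mid q^l - 1$ (immediate from $r \mid q-1$), which allows the common factor $r$ to be extracted from both arguments of the gcd. Apart from this routine identification, nothing in the argument is conceptually deep; the main work has already been done in Lemma~\ref{lem:Nl2consta} and Lemma~\ref{lem:limidliff2}.
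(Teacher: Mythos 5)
Your primary argument---Möbius inversion applied to Lemma~\ref{lem:Nl2consta} to recover $\sum_{j\mid l}jN_j=\sigma(l)\gcd(n,(q^l-1)/r)$---is exactly the paper's proof. The supplementary combinatorial derivation via Lemma~\ref{lem:limidliff2} is also correct and is a nice independent check, but it is not needed.
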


\subsection{Coset leaders in $Z_{n,r}$}   \label{sec:CL}

For any positive integer $l$, if there exist $q$-cyclotomic cosets with $l$ elements (that is, $N_l^{(q,n,r)} > 0$)  in $Z_{n,r}$, then 
we list their coset leaders in ascending order,
and we denote the $i$-th one by $\delta_i^{(q,n,r,l)}$ for $1 \le i \le N_l^{(q,n,r)}$.

In the sequel, for simplicity and without confusion we denote 
\begin{align*}
& N_l = N_l^{(q,n,r)}, \quad l \ge 1, \\
&  \delta_i^{(l)} = \delta_i^{(q,n,r,l)}, \quad 1\le i \le N_l. 
\end{align*}

We want to get some estimates on the coset leaders $\delta_i^{(l)}$ for $1\le i \le N_l$. 
For this, we need a preparation. 

\begin{lemma}  \label{lem:ar}
For any two positive integers $a$ and $i$, the $i$-th element  not disivible by $q$ in the set 
$\{a, a+r, a+2r, \ldots \}$ $($listed in ascending order$)$ is at least 
$$
a + \left(\lceil qi/(q-1) \rceil - 2 \right) r.
$$
\end{lemma}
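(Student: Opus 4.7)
The plan is a direct counting argument combined with a small arithmetic manipulation. The key background fact I would use is that the paper's setting forces $r \mid q-1$, so $\gcd(r,q) = 1$. Hence the residues $a + jr \pmod{q}$ for $j = 0, 1, 2, \ldots$ are purely periodic with period $q$ and hit each residue class modulo $q$ exactly once per period. In particular, among any $q$ consecutive terms of $\{a, a+r, a+2r, \ldots\}$, exactly one is divisible by $q$.

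First, I would write the $i$-th element of the sequence not divisible by $q$ as $a + Jr$ for some integer $J \geq 0$, and let $k$ be the number of indices $j \in \{0, 1, \ldots, J\}$ with $q \mid (a + jr)$. By the definition of the $i$-th element we have $k + i = J + 1$. The periodicity observation gives $k \geq \lfloor (J+1)/q \rfloor$, so
$$
i \;\leq\; (J+1) - \left\lfloor (J+1)/q \right\rfloor.
$$

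The second step is to invert this inequality to a lower bound on $J$. Writing $J + 1 = qs + t$ with integers $s \geq 0$ and $0 \leq t \leq q - 1$, the bound above becomes $(q-1)s + t \geq i$, which yields $s \geq (i-t)/(q-1)$ (trivially satisfied when $i \leq t$). Plugging back into $J + 1 = qs + t$ gives
$$
J + 1 \;\geq\; \frac{q(i-t)}{q-1} + t \;=\; \frac{qi - t}{q-1} \;\geq\; \frac{qi}{q-1} - 1,
$$
where the last step uses $t \leq q-1$. Hence $J \geq qi/(q-1) - 2$, and since $J$ is an integer this is equivalent to $J \geq \lceil qi/(q-1) \rceil - 2$, giving the claimed bound $a + Jr \geq a + (\lceil qi/(q-1) \rceil - 2)r$.

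There is no real obstacle: the argument is elementary once one notices the periodicity of $a + jr$ modulo $q$. The only place that requires care is the transition from the real inequality $J \geq qi/(q-1) - 2$ to the stated form involving the ceiling; this uses only that for an integer $J$ and a real $x$, the inequality $J \geq x - 2$ is equivalent to $J \geq \lceil x \rceil - 2$. I would also want to double-check the edge case $i \leq t$, but in that regime the inequality $s \geq (i-t)/(q-1)$ follows from $s \geq 0$, and the remaining algebra is identical.
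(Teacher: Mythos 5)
Your proof is correct and follows essentially the same route as the paper's: both arguments use $\gcd(q,r)=1$ to see that each block of $q$ consecutive terms contains exactly one multiple of $q$, derive the counting inequality $i \le j - \lfloor j/q\rfloor$ (your $j = J+1$), invert it to $j \ge qi/(q-1) - 1$, and finish with integrality of $j$ via the ceiling. Your single-inequality treatment merely streamlines the paper's two-case analysis of whether the $j$-th term itself accounts for an extra multiple of $q$; the substance is identical.
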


\begin{proof}
Denote the set $S_{a,r}=\{a, a+r, a+2r, \ldots \}$, and we list its elements in ascending order. 
Since $\gcd(q,r)=1$ (due to $r \mid q-1$),
  we have that every $q$ consecutive elements in $S_{a,r}$ form a complete residue system modulo $q$.

Assume that the $i$-th element  not disivible by $q$ in the set 
$S_{a,r}$ is $a+ (j-1)r$ for some integer $j$. 
Then,  we have either $i = j - \lfloor j/q \rfloor - 1$, or $i = j - \lfloor j/q \rfloor$. 
If $i = j - \lfloor j/q \rfloor - 1$, then we have 
$
i \le j -  (j/q - (q-1)/q) -1 ,
$
 which gives 
$
j \ge \frac{qi}{q-1} + \frac{1}{q-1}. 
$
If $i = j - \lfloor j/q \rfloor$, then we have 
$
i \le j -  (j/q - (q-1)/q),
$
 which implies
$
j \ge \frac{qi}{q-1} - 1. 
$
Hence, noticing that $j$ is an integer, we always have 
$$
j \ge \left\lceil \frac{qi}{q-1} - 1 \right\rceil  = \left\lceil \frac{qi}{q-1}\right\rceil - 1. 
$$
This yields the desired result.
\end{proof}

Now, we are ready to give a lower bound for $\delta_i^{(l)}$ with $1\le i \le N_l$. 

\begin{lemma}  \label{lem:delta1}
For any positive integer $l$, if $N_l > 0$, then for any integer $i$ with $1\le i \le N_l$, we have 
\begin{align*}
  \delta_i^{(l)} & \ge  \frac{nr(1 + (\lceil qi/(q-1) \rceil -2) r)}{\gcd(nr,q^l-1)} \\ 
& \geq  \frac{nr^2(i-1)}{\gcd(nr,q^l-1)}+\frac{nr}{\gcd(nr,q^l-1)}.
\end{align*}
\end{lemma}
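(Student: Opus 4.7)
The plan is to first identify a clean superset in which every coset leader $\delta_i^{(l)}$ must lie, and then to count elements of this superset in ascending order using Lemma~\ref{lem:ar}. Set $d = nr/\gcd(nr, q^l-1)$. For any coset leader $\delta \in Z_{n,r}$ with $l_\delta = l$, Lemma~\ref{lem:limidliff2} yields $d \mid \delta$, so $\delta = jd$ for some positive integer $j$. I will then argue that $q \nmid j$: since $\gcd(n,q) = 1$ and $r \mid q-1$, we have $\gcd(nr, q) = 1$ and hence $\gcd(d, q) = 1$, so $q \mid j$ is equivalent to $q \mid \delta$. If $q \mid \delta$, then $\delta/q$ is a positive integer in $\{1, \ldots, nr-1\}$ satisfying $q \cdot (\delta/q) = \delta$, so $\delta/q \equiv q^{-1}\delta \pmod{nr}$ lies in the same $q$-cyclotomic coset as $\delta$; since $\delta/q < \delta$, this contradicts $\delta$ being the coset leader, so $q \nmid j$.

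Since $N_l > 0$, Lemma~\ref{lem:Nl2consta} (together with Lemma~\ref{lem:sumlNl}) forces $\sigma(l) = 1$, i.e., $\gcd(r, d) = 1$. Hence the condition $\delta = jd \in Z_{n,r}$, equivalent to $jd \equiv 1 \pmod r$, is solvable, and its positive solutions form an arithmetic progression $\{j_0, j_0 + r, j_0 + 2r, \ldots\}$ with $j_0 \in \{1, 2, \ldots, r\}$. Applying Lemma~\ref{lem:ar} with $a = j_0$ to this progression shows that the $i$-th term not divisible by $q$ is at least $j_0 + (\lceil qi/(q-1)\rceil - 2)r$. Multiplying by $d$ and using $j_0 \ge 1$ will give the first inequality of the lemma.

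For the second inequality, I will note that $qi/(q-1) = i + i/(q-1) > i$ for all $i \ge 1$, so $\lceil qi/(q-1)\rceil \ge i+1$, hence $(\lceil qi/(q-1)\rceil - 2)r \ge (i-1)r$; substituting into the first bound immediately gives the stated lower bound.

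The main obstacle is the characterization step, which hinges on two small but essential facts: that a coset leader cannot be divisible by $q$ (handled by the $\delta/q$ shrinkage argument above, which works uniformly for $l=1$ and $l>1$), and that the congruence $jd \equiv 1 \pmod r$ is solvable under the hypothesis $N_l > 0$ (handled via $\sigma(l) = 1$). After these points, invoking Lemma~\ref{lem:ar} and comparing the two bounds are routine.
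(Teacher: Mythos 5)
Your proposal is correct and follows essentially the same route as the paper's proof: reduce via Lemma~\ref{lem:limidliff2} to multiples of $d=nr/\gcd(nr,q^l-1)$ in $Z_{n,r}$, use $N_l>0$ and Lemma~\ref{lem:sumlNl} to get $\gcd(r,d)=1$ so these multiples form an arithmetic progression with common difference $rd$, and then apply Lemma~\ref{lem:ar}. The only (welcome) difference is that you make explicit the observation that a coset leader cannot be divisible by $q$, which the paper leaves implicit.
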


\begin{proof}
First,by Lemma~\ref{lem:limidliff2} we have 
$
\frac{nr}{\gcd(nr,q^l-1)} \, \Big| \, \delta_i^{(l)}. 
$
For simplicity, we denote $t = nr/\gcd(nr,q^l-1)$, then $\delta_i^{(l)}$ is a  multiple of $t$. 
Since $r \mid q-1$, we have $t \mid n$.

Besides, since $N_l > 0$, by Lemma~\ref{lem:sumlNl} we have  $\gcd(r,t)=1$.
So, every $t$ consecutive elements in $Z_{n,r}$ form a complete residue system modulo $t$. 
If $at$ is the smallest multiple of $t$ contained in $Z_{n,r}$ for some positive integer $a \le r$, 
then the set of all the multiples of $t$ in $Z_{n,r}$ is (noticing $\gcd(r,t)=1$): 
$$
S(t) = \{at, at+rt, at+2rt, \ldots, at+jrt\}= \{at, (a+r)t, (a+2r)t, \ldots, (a+jr)t\}, 
$$
where $j$ is the largest integer such that $at+jrt \le 1+(n-1)r$. 
Hence, for our purpose it suffices to estimate the $i$-th element in $S(t)$ (listed in ascending order) which is not disivible by $q$. 

Since $\gcd(n,q)=1$ and $t \mid n$, we have $\gcd(t,q)=1$. 
So, it suffices to estimate the $i$-th element in the set $\{a, a+r, \ldots, a+jr\}$ (listed in ascending order) which is not disivible by $q$.  
Hence, combining this with Lemma~\ref{lem:ar}, we obtain 
\begin{equation*}
\begin{split}
\delta_i^{(l)}  \ge \left(a + (\lceil qi/(q-1)\rceil -2)  r \right)t   
 \ge \frac{nr(1 + (\lceil qi/(q-1)\rceil -2) r)}{\gcd(nr,q^l-1)}.
\end{split}
\end{equation*}
This completes the proof of the first lower bound in the lemma. 

The second lower bound in the lemma follows directly from the first one, because 
$$
\lceil qi/(q-1) \rceil -2 = i + \lceil i/(q-1) \rceil -2 \ge i + 1-2 =i-1. 
$$
\end{proof}

We remark that the first lower bound in Lemma~\ref{lem:delta1} is somehow optimal for the general case. 
For example, choosing $q=3, n=80, r=1$, we have $\delta_2^{(2)} = 20$, 
which coincides with the lower bounds in Lemma~\ref{lem:delta1} (with $i=2, l=2$). 

We also need to estimate $\delta_i^{(l)}$ further for some special $i$'s. 

\begin{lemma}\label{lem:deltaNm}
Write  $nr=(q^m-1)/s$ for some positive integers $m,s$. 
  Then,  if $N_m \ge 1$, we have  
  $$
\delta^{(m)}_{\lceil N_m/2 \rceil+1} \ge \delta^{(m)}_{\lfloor N_m/2 \rfloor+1} >  1 + \left(\frac{qN_m}{2(q-1)} -2\right)r. 
$$
\end{lemma}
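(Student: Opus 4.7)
The plan is to invoke Lemma~\ref{lem:delta1} with $l=m$. The key observation is that since $nr=(q^m-1)/s$, we have $nr \mid q^m-1$, and therefore $\gcd(nr,q^m-1)=nr$, so the factor $nr/\gcd(nr,q^m-1)$ appearing in that lemma collapses to $1$. This is exactly what makes the bound useful in this regime: the sharp first bound of Lemma~\ref{lem:delta1} specializes to
$$
\delta_i^{(m)} \ge 1 + \left( \left\lceil \frac{qi}{q-1} \right\rceil - 2 \right) r
$$
for every $1 \le i \le N_m$.

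The first claimed inequality $\delta_{\lceil N_m/2\rceil + 1}^{(m)} \ge \delta_{\lfloor N_m/2\rfloor + 1}^{(m)}$ is immediate, since the coset leaders $\delta_i^{(m)}$ are listed in ascending order and $\lceil N_m/2\rceil \ge \lfloor N_m/2\rfloor$. So the work lies in the second inequality. I would set $i = \lfloor N_m/2 \rfloor + 1$ and verify that $i > N_m/2$ in both parities: for even $N_m$ the excess is $1$, and for odd $N_m$ it is $1/2$. In either case,
$$
\left\lceil \frac{qi}{q-1} \right\rceil \ge \frac{qi}{q-1} > \frac{q N_m}{2(q-1)},
$$
and substituting into the displayed lower bound and using $r \ge 1$ to preserve strictness after subtracting $2$ and multiplying by $r$ gives
$$
\delta_{\lfloor N_m/2 \rfloor + 1}^{(m)} \ge 1 + \left( \left\lceil \tfrac{qi}{q-1} \right\rceil - 2 \right) r > 1 + \left( \frac{q N_m}{2(q-1)} - 2 \right) r,
$$
as desired.

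I do not expect a serious obstacle: this lemma is really a clean specialization of the general estimate in Lemma~\ref{lem:delta1} once one notices the collapse $\gcd(nr,q^m-1)=nr$. The only point worth double-checking is that the strict inequality $qi/(q-1) > qN_m/(2(q-1))$ survives taking the ceiling, subtracting $2$, and multiplying by $r$; this is safe because the gap $q(2i-N_m)/(2(q-1))$ equals either $q/(q-1)$ or $q/(2(q-1))$, which is strictly positive independent of any rounding, so strictness is robustly preserved.
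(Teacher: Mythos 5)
Your proposal is correct and follows essentially the same route as the paper: apply Lemma~\ref{lem:delta1} with $l=m$, note that $\gcd(nr,q^m-1)=nr$ collapses the prefactor to $1$, and then use $\lceil qi/(q-1)\rceil \ge qi/(q-1)$ together with $\lfloor N_m/2\rfloor+1 > N_m/2$ to get the strict inequality. Your extra check that strictness survives the ceiling and the multiplication by $r$ is sound and matches the paper's chain of inequalities.
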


\begin{proof}
If $N_m \ge 1$, by Lemma~\ref{lem:delta1} we get 
\begin{align*}
  \delta^{(m)}_{\lceil N_m/2 \rceil+1}  
  \ge \delta^{(m)}_{\lfloor N_m/2 \rfloor+1} 
& \ge \frac{nr(1 + (\lceil q(\lfloor N_m/2 \rfloor+1)/(q-1) \rceil -2) r)}{\gcd(nr,q^m-1)} \\ 
 & = 1 + (\lceil q(\lfloor N_m/2 \rfloor+1)/(q-1) \rceil -2) r \\
 & \ge 1 + (q(\lfloor N_m/2 \rfloor+1)/(q-1) -2) r \\
 & > 1 + \left(qN_m/(2(q-1)) -2 \right) r, 
\end{align*}
which gives the desired result.
\end{proof}

The following lemma is a generalization of \cite[Lemma 9]{SLD} 
for determining some elements which are not coset leaders. 

\begin{lemma}\label{lem:NotLeader}
 Assume $nr=(q^m-1)/s$.  
  Then, $q^{i}+qt+j$ is not a coset leader modulo $nr$ 
for any integer $i$ with $(m+1)/2 \leq i \leq \lfloor m-\log_q(s+1) \rfloor$, 
 any integer $j$ with  $1 \leq j \leq q-1$ and any integer $t$ with  
  $$0 \leq t < \frac{q^{i}-j(q^{m-i}-1)-1}{q^{m-i+1}-q}.$$
Moreover, all these $q^{i}+qt+j$ are less than $nr$ and pairwise distinct. 
\end{lemma}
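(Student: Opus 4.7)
The plan is to show, for each triple $(i, t, j)$ in the specified range, that $q^i + qt + j$ is not the minimum of its $q$-cyclotomic coset modulo $nr$ by producing a strictly smaller element of the same coset. The natural candidate is obtained by multiplying by $q^{m-i}$: since $nr = (q^m-1)/s$ gives $q^m \equiv 1 \pmod{nr}$, one has
\[
q^{m-i}(q^i + qt + j) = q^m + q^{m-i+1}t + jq^{m-i} \equiv 1 + q^{m-i+1}t + jq^{m-i} \pmod{nr}.
\]
Then I would rearrange the desired strict inequality $1 + q^{m-i+1}t + jq^{m-i} < q^i + qt + j$ into $t(q^{m-i+1} - q) < q^i - j(q^{m-i}-1) - 1$, which is precisely the hypothesis on $t$.

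The point that requires genuine care is that the right-hand side of the congruence above is truly the canonical residue in $\Z_{nr} = \{1, 2, \ldots, nr\}$, rather than a value that must be further reduced. For this I would estimate $q^i + qt + j$ directly from the upper bound on $t$; after clearing denominators and simplifying, the hypothesis on $t$ yields $qt + j < (q^i - 1)/(q^{m-i} - 1)$, and therefore
\[
q^i + qt + j < \frac{q^m - 1}{q^{m-i} - 1}.
\]
The constraint $i \le \lfloor m - \log_q(s+1) \rfloor$ implies $q^{m-i} - 1 \ge s$, so the bound is at most $(q^m - 1)/s = nr$. Hence $q^i + qt + j$ and the strictly smaller $1 + q^{m-i+1}t + jq^{m-i}$ both lie in $\Z_{nr}$, which establishes that $q^i + qt + j$ is not a coset leader.

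For the pairwise distinctness, I would assume $q^{i_1} + qt_1 + j_1 = q^{i_2} + qt_2 + j_2$ with both triples in the valid range. Reducing modulo $q$, and using $1 \le j_\ell \le q-1$, forces $j_1 = j_2$. In the case $i_1 = i_2$ this collapses to $t_1 = t_2$ at once; in the case $i_1 < i_2$, cancellation yields $t_1 - t_2 = q^{i_2-1} - q^{i_1-1}$, so $t_1 \ge q^{i_2-1} - q^{i_1-1}$, and combining this with the upper bound on $t_1$ derived above leads after clearing denominators to a clean inequality of the shape $q^{m+i_2-i_1} < q^m + q^{i_2}$, which fails because $i_2 - i_1 \ge 1$ and $i_2 < m$.

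The step I expect to be the main obstacle is the clean packaging of the estimate $q^i + qt + j < (q^m-1)/(q^{m-i}-1)$ coming out of the allowed range of $t$; once this identity is in hand, the hypothesis $i \le m - \log_q(s+1)$ slots in as the exact condition placing $q^i + qt + j$ below $nr$, and the rest of the argument (both the non-leader claim and the distinctness) is direct algebra.
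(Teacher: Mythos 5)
Your proof is correct and follows essentially the same route as the paper's: multiply by $q^{m-i}$, use the hypothesis on $t$ to show the reduced element $1+q^{m-i+1}t+jq^{m-i}$ is strictly smaller, and use $i\le\lfloor m-\log_q(s+1)\rfloor$ to guarantee $q^i+qt+j<(q^m-1)/(q^{m-i}-1)\le (q^m-1)/s=nr$. For the pairwise distinctness the paper instead notes that $qt+j<(q^i-1)/(q^{m-i}-1)<q^i$ places each value in the interval $[q^i,q^{i+1})$, so distinct $i$'s are automatically separated; your reduction mod $q$ followed by the size comparison establishes the same fact.
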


\begin{proof}
 First, we have
  \begin{align*}
    (q^{i}+qt+j)q^{m-i} \equiv& (qt+j) q^{m-i}+1 \pmod{q^m-1}\\
    \equiv& (qt+j) q^{m-i}+1 \pmod{nr}.
  \end{align*}
In addition, we have 
  \begin{align*}
    (qt+j)q^{m-i}+1
    &=(qt+j)(q^{m-i}-1)+qt+j+1\\
    &<(\frac{q^{i}-j(q^{m-i}-1)-1}{q^{m-i}-1} +j)\cdot (q^{m-i}-1)+qt+j+1\\
    &=q^{i}-1+qt+j+1
    =q^{i}+qt+j,   
  \end{align*}
  and
  $$
  q^{i}+qt+j<q^i + \frac{q^{i}-j(q^{m-i}-1)-1}{q^{m-i}-1} +j=\frac{q^m-1}{q^{m-i}-1} \leq \frac{q^m-1}{s}=nr, 
  $$
where we need the condition $i \le \lfloor m - \log_q(s+1) \rfloor$. 
Hence,  $q^{i}+qt+j$ is not a coset leader modulo $nr$. 

Due to the choice of $j$, if we fix $i$, then clearly  $q^{i}+qt+j$ are pairwise distinct for various $j$ and $t$. 

Finally, notice that for any $i$ with $(m+1)/2 \leq i \leq \lfloor m - \log_q(s+1) \rfloor-1$, we have 
\begin{equation}  \label{eq:qi+1}
\begin{split}
q^i +qt + j  < q^i + q \cdot \frac{q^i - (q^2-1)-1}{q^3-q} + q-1 
 < 2q^i \le q^{i+1}. 
\end{split}
\end{equation}
Hence, we conclude that all these $q^{i}+qt+j$ are pairwise distinct. 
\end{proof}

\subsection{The case with two kinds of cyclotomic cosets}

In this section,  we write $nr=(q^m-1)/s$ for some positive integers $m, s$. 
We also assume that there are exactly two possible values of $l_i$ (recalling $l_i = |C_i|$),
that is,
$\{l_i:i \in Z_{n,r}\}=\{l,m\}$ for some positive integers $l, m$, 
where $l < m$ and $l \mid m$. 

In this case, we can obtain better estimates about the coset leaders $\delta_i^{(l)}$ for some special $i$'s. 

\begin{lemma} \label{lem:deltaNlNm}
  Assume that $\{l_i:i\in Z_{n,r}\}=\{l,m\}$ for some positive integer $l < m$ and $N_l>1$.
  Then,
  $$
  \delta_{\lceil N_l/2 \rceil +1}^{(l)} \geq  \delta_{\lfloor N_l/2 \rfloor+1}^{(l)} >  \frac{qN_m}{2(q-1)} \cdot r, 
  $$
where $N_m = (n-\gcd(n,(q^l-1)/r))/m$.
\end{lemma}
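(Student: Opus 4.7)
The plan is to invoke Lemma~\ref{lem:delta1} (the second, simpler bound) and then close the gap by elementary algebra using the structural relations from Lemmas~\ref{lem:liord} and~\ref{lem:sumlNl}. The monotonicity assertion $\delta_{\lceil N_l/2\rceil+1}^{(l)} \ge \delta_{\lfloor N_l/2\rfloor+1}^{(l)}$ is immediate from $\lceil N_l/2\rceil \ge \lfloor N_l/2\rfloor$, so only the strict lower bound for $\delta_{\lfloor N_l/2\rfloor+1}^{(l)}$ needs work.

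First I would assemble the structural input. Since $nr=(q^m-1)/s$, Lemma~\ref{lem:liord} forces $l_i \mid m$ for every $i \in Z_{n,r}$; combined with $\{l_i\}=\{l,m\}$ and $l<m$, this gives $l \mid m$ (hence $m \ge 2l$) and also $\ord_{nr}(q)=m$. Applying Lemma~\ref{lem:sumlNl} at exponent $m$ (with $\sigma(m)=1$ because $\gcd(nr,q^m-1)=nr$) gives $lN_l+mN_m = n$. Applying it at exponent $l$ (with $\sigma(l)=1$ because $N_l>0$) yields $lN_l = \gcd(n,(q^l-1)/r)$, and since $r \mid q^l-1$ this upgrades to $\gcd(nr,q^l-1) = r \cdot lN_l$. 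Plugging $i=\lfloor N_l/2\rfloor+1$ into the second inequality of Lemma~\ref{lem:delta1} then produces
\begin{equation*}
\delta_{\lfloor N_l/2\rfloor+1}^{(l)} \ge \frac{n\bigl(1 + r\lfloor N_l/2\rfloor\bigr)}{lN_l}.
\end{equation*}

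It remains to show this exceeds $qN_mr/(2(q-1))$. Clearing denominators, substituting $n=lN_l+mN_m$, and using $m \ge 2l$, it suffices to prove
\begin{equation*}
2(q-1)(N_l + 2N_m)\bigl(1 + r\lfloor N_l/2\rfloor\bigr) > q N_m r N_l.
\end{equation*}
Expanding the left-hand side and applying $2\lfloor N_l/2\rfloor \ge N_l-1$, the $N_m$-linear contribution is at least $N_m\bigl(4(q-1) + r(N_l(q-2)-2(q-1))\bigr)$, while the residual piece $2(q-1)N_l(1+r\lfloor N_l/2\rfloor)$ is manifestly positive. To close the proof I would verify the bracket is strictly positive: if $N_l(q-2) \ge 2(q-1)$ it is at least $4(q-1)$, and otherwise $c := 2(q-1)-N_l(q-2)$ lies in $(0,2]$ (because $N_l \ge 2$), so $r \le q-1$ forces $4(q-1) - rc \ge 4(q-1)-2(q-1) = 2(q-1) > 0$. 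The main obstacle is precisely this final positivity check: the sign of $N_l(q-2)-2(q-1)$ is not controlled a priori, and one has to combine $N_l \ge 2$ with $r \le q-1$ simultaneously to extract positivity from the constant term $4(q-1)$.
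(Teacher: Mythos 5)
Your proof is correct, and while it starts from the same ingredients as the paper's proof (Lemma~\ref{lem:delta1} evaluated at $i=\lfloor N_l/2\rfloor+1$, the identities $lN_l=\gcd(n,(q^l-1)/r)$ and $lN_l+mN_m=n$ from Lemma~\ref{lem:sumlNl}, and $m\ge 2l$), the way you close the estimate is genuinely different. The paper discards information twice: it replaces $N_m$ by the cruder bound $n/m$, reducing the target to $\delta^{(l)}_{\lfloor N_l/2\rfloor+1}\ge\frac{q}{q-1}\cdot\frac{nr}{2m}$, and this loss forces a case split on the parity of $N_l$; in the odd case the inequality becomes tight enough that the paper must separately establish $q\ge 3$ (via the observation that $q=2$ forces $N_l=1$) and rule out $N_l=q=3$ (via $N_l\mid n$ and $\gcd(n,q)=1$) in order to get $\frac{N_l}{N_l-1}\cdot\frac{q}{q-1}\le 2$. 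You instead keep the exact relation $n=lN_l+mN_m$ and the full two-term lower bound $\frac{n(1+r\lfloor N_l/2\rfloor)}{lN_l}$, which leaves enough slack that a single algebraic expansion handles all parities and all $q$ uniformly; your final positivity check of $4(q-1)+r(N_l(q-2)-2(q-1))$ using only $N_l\ge 2$ and $r\le q-1$ is correct (the two cases $N_l(q-2)\ge 2(q-1)$ and $c=2(q-1)-N_l(q-2)\in(0,2]$ both give a bracket of at least $2(q-1)$), and the residual term $2(q-1)N_l(1+r\lfloor N_l/2\rfloor)>0$ supplies strictness. The trade-off is that the paper's route, once the special cases are dispatched, is a short transparent chain of inequalities, whereas yours requires a more careful expansion but avoids the ad hoc number-theoretic exclusions entirely.
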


\begin{proof}
  First, since $\{l_i:i\in Z_{n,r}\}=\{l,m\}$, 
  by Lemma~\ref{lem:sumlNl} we have 
$$
N_l=\gcd(n,(q^l-1)/r)/l,
\qquad 
N_m=(n-\gcd(n,(q^l-1)/r))/m. 
$$
So, $N_m <n/m$, and then 
  $$
  \frac{qN_m}{2(q-1)} \cdot r<\frac{q}{q-1} \cdot \frac{nr}{2m}.
  $$
  Hence,  for our purpose it suffices to prove
  \begin{equation}\label{eq:NlNm}
    \delta_{\lfloor N_l/2 \rfloor+1}^{(l)} \ge \frac{q}{q-1} \cdot \frac{nr}{2m}.
  \end{equation}

 Now,  assume that $N_l$ is even.
 Then, by the second lower bound in Lemma~\ref{lem:delta1}  we have 
  \begin{equation*}
    \begin{split}
      \delta_{\lfloor N_l/2 \rfloor +1}^{(l)}
      >&(N_l/2+1-1) \cdot \frac{nr^2}{\gcd(nr,q^l-1)}\\
      =&\frac{\gcd(n,(q^l-1)/r)}{2l} \cdot \frac{nr^2}{\gcd(nr,q^l-1)}\\
      =&\frac{nr}{2l}
      \geq \frac{nr}{m}
      \ge \frac{q}{q-1} \cdot \frac{nr}{2m},
    \end{split}
  \end{equation*}
  where we also use $m \geq 2l$ (since $l \mid m$ and $l < m$).
  So,   \eqref{eq:NlNm} holds in this case.

  Finally,
  assume that $N_l$ is odd and $N_l>1$.
  Considering the case when $q=2$,
  we have $r=1$,
  which implies  $n \in Z_{n,r}$ and  the cyclotomic coset $C_n=\{n\}$. 
  So, in this case 
  we have $l=1$ and $N_l=\gcd(n,1)=1$,
  which contradicts with $N_l>1$.
  Thus, we must have $q \geq 3$. 

In addition, suppose $N_l=q=3$.
  Then, due to $N_l=\gcd(n,(q^l-1)/r)/l$,
  we have $N_l \mid n$.
  So, we have $q \mid n$,
  which contradicts with $\gcd(n,q)=1$.
Hence, we must have either $N_l>3$ or $q>3$. 
Then, we get 
\begin{equation}  \label{eq:qNl}
 \frac{N_l}{N_l-1} \cdot \frac{q}{q-1} \le \frac{3}{3-1}  \cdot \frac{4}{4-1} = 2.
\end{equation}

  By Lemma~\ref{lem:delta1} and noticing that $N_l$ is odd, we obtain 
  \begin{equation*}
    \begin{split}
      \delta_{\lfloor N_l/2 \rfloor +1}^{(l)}
      >&(N_l/2-1/2+1-1) \cdot \frac{nr^2}{\gcd(nr,q^l-1)}\\
      =&\left(\frac{\gcd(n,(q^l-1)/r)}{2l}-\frac{1}{2}\right) \cdot \frac{nr^2}{\gcd(nr,q^l-1)}\\
      =&\left(\frac{1}{2l}-\frac{r}{2\gcd(nr,q^l-1)}\right)  nr \\
      =&\left(\frac{1}{2l}-\frac{1}{2l N_l}\right)  nr
      = \frac{1}{2l}\cdot \frac{N_l-1}{N_l} \cdot nr.
    \end{split}
  \end{equation*}
Then, combining this with \eqref{eq:qNl} and $m \ge 2l$, we have 
  \begin{equation*}
    \begin{split}
      \delta_{\lfloor N_l/2 \rfloor +1}^{(l)}
       > \frac{1}{2l}\cdot  \frac{N_l-1}{N_l} \cdot nr 
       \ge \frac{1}{m}\cdot  \frac{q}{2(q-1)} \cdot nr = \frac{q}{q-1} \cdot \frac{nr}{2m}. 
    \end{split}
  \end{equation*}
This gives  \eqref{eq:NlNm}, and then we complete the proof.
\end{proof}

In Lemma~\ref{lem:deltaNlNm}, if $N_l=1$, then the lower bound there for $\delta_{\lfloor N_l/2 \rfloor+1}^{(l)}$
is not true in general; see Example~\ref{ex:NlNm}.

\begin{example}  \label{ex:NlNm}
Let $q=7$, $n=19$, $r=6$. 
  Then, $m=3$, $s=3$, and the $7$-cyclotomic cosets in $Z_{19,6}$ are: 
$ C_{1}=\{1, 7, 49\}, 
  C_{13}=\{13, 67, 91\}, 
  C_{19}=\{19\}, 
  C_{25}=\{25, 61, 85\}, 
  C_{31}=\{31, 37, 103\},
  C_{43}=\{43, 55, 73\}, 
  C_{79}=\{79, 97, 109\}$.
So, $\{l_i,i\in Z_{19,6}\}=\{1,3\}$, $N_1=1$ and $N_3 = 6$. 
 Then, $\delta_{\lfloor N_1/2 \rfloor+1}^{(1)}=\delta_{1}^{(1)}=19$, but 
  $
  \frac{qN_m}{2(q-1)} \cdot r= \frac{7 \times 6}{12} \times 6=21 > 19.
  $
\end{example}

So, when $N_l=1$, 
we need to impose some  extra condition to achieve such a lower bound.

\begin{lemma}\label{lem:l=1}
  Assume that $\{l_i:i\in Z_{n,r}\}=\{l,m\}$ for some positive integer $l < m$, $N_l=1$ and $2m(q-1)\geq qlr$.
  Then,
  \begin{equation*}
    \delta_{1}^{(l)} > \frac{qN_m}{2(q-1)} \cdot r, 
  \end{equation*}
where $N_m = (n-\gcd(n,(q^l-1)/r))/m$.
\end{lemma}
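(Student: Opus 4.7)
The plan is to bound $\delta_1^{(l)}$ from below via Lemma~\ref{lem:delta1} with $i=1$, express $N_m$ exactly using Lemma~\ref{lem:sumlNl}, and then verify the desired inequality by direct algebra, using the hypothesis $2m(q-1) \ge qlr$ at the very end.

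First I would use Lemma~\ref{lem:sumlNl} to pin down both $N_m$ and $\gcd(nr,q^l-1)$. Since $\{l_i : i\in Z_{n,r}\}=\{l,m\}$, the only nonzero $N_j$ with $j\mid m$ are $N_l$ and $N_m$. Applying the lemma with $j=m$: the identity $nr=(q^m-1)/s$ gives $\gcd(nr,q^m-1)=nr$, hence $nr/\gcd(nr,q^m-1)=1$ and $\sigma(m)=1$, so
$$lN_l+mN_m=\gcd(n,(q^m-1)/r)=\gcd(n,ns)=n.$$
With $N_l=1$ this yields $N_m=(n-l)/m$. Applying the lemma again with $j=l$ (noting $N_l>0$ forces $\sigma(l)=1$) gives $\gcd(n,(q^l-1)/r)=l$. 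Since $r\mid q-1\mid q^l-1$, this implies $\gcd(nr,q^l-1)=r\gcd(n,(q^l-1)/r)=rl$.

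Next, I apply Lemma~\ref{lem:delta1} with $i=1$. For any $q\ge 2$ we have $\lceil q/(q-1)\rceil=2$, so the first bound collapses to
$$\delta_1^{(l)}\ge \frac{nr}{\gcd(nr,q^l-1)}=\frac{n}{l}.$$
It therefore suffices to prove $n/l>qN_m r/(2(q-1))=q(n-l)r/(2m(q-1))$. Cross-multiplying by the positive quantity $2ml(q-1)$, this is equivalent to $2nm(q-1)>qlr(n-l)$, which rearranges to
$$n\bigl(2m(q-1)-qlr\bigr)+ql^2 r>0.$$
Under the hypothesis $2m(q-1)\ge qlr$ the first summand is nonnegative, while $ql^2 r>0$ strictly, so the inequality holds.

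I do not expect a substantial obstacle: everything reduces to combining the coset-size counting from Lemma~\ref{lem:sumlNl} with the lower bound of Lemma~\ref{lem:delta1} at $i=1$. The only point that needs attention is that the hypothesis $2m(q-1)\ge qlr$ is allowed to be an equality, and in that borderline case the strict inequality is delivered precisely by the positive term $ql^2 r$ arising from the $-l$ correction in $N_m=(n-l)/m$. This is also where the assumption $N_l=1$ is used essentially, since it is what makes $\gcd(nr,q^l-1)=rl$ small enough to lower-bound $\delta_1^{(l)}$ by $n/l$.
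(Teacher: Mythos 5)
Your proof is correct and follows essentially the same route as the paper's: both use Lemma~\ref{lem:sumlNl} to deduce $\gcd(n,(q^l-1)/r)=l$ from $N_l=1$, then apply Lemma~\ref{lem:delta1} at $i=1$ to get $\delta_1^{(l)}\ge n/l$, and close with the hypothesis $2m(q-1)\ge qlr$. The only (harmless) differences are that you compute $N_m=(n-l)/m$ exactly and extract strictness from the $ql^2r$ term, whereas the paper bounds $N_m<n/m$ and treats $q=2$ as a separate case.
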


\begin{proof}
As in \eqref{eq:NlNm} and noticing $N_l=1$, for our purpose it suffices to prove 
\begin{equation}  \label{eq:N1Nm}
  \delta_{1}^{(l)} \ge \frac{q}{q-1} \cdot \frac{nr}{2m}. 
\end{equation}

  First,  assume $q=2$,  then we have $r=1$,
  which implies $n \in Z_{n,r}$ and the cyclotomic coset $C_n=\{n\}$. 
  So,  we have $l=1$ and $N_l=\gcd(n,1)=1$.
  Hence, 
  $$
  \delta_{1}^{(l)}=n>\frac{q}{q-1} \cdot \frac{nr}{2m},
  $$
  which implies \eqref{eq:N1Nm} when $q=2$.

  Now,  assume $q>2$.
  By Lemma~\ref{lem:sumlNl},
  we have $N_l=\gcd(n,(q^l-1)/r) / l$. 
Then, noticing $N_l=1$, we obtain 
  $\gcd(n,(q^l-1)/r)=l$. 
  Combining this with the condition $2m(q-1)\geq qlr$ and using the first lower bound in Lemma~\ref{lem:delta1},
  we have
  \begin{equation*}
    \begin{split}
      \delta_{1}^{(l)}\geq \frac{nr}{\gcd(nr,q^l-1)}
      =\frac{n}{l}
      =\frac{2m}{l} \cdot \frac{n}{2m}
      \geq\frac{qlr}{l(q-1)} \cdot  \frac{n}{2m}
      =\frac{q}{q-1} \cdot \frac{nr}{2m}.
    \end{split}
    \end{equation*}
  Thus, 
  \eqref{eq:N1Nm} also holds in this case.
  This completes the proof.
\end{proof}

In Lemma~\ref{lem:l=1},
when $r=1,2 \text{ or } 3$,
the extra condition 
$2m(q-1)\geq qlr$ in fact always holds.

\begin{corollary}   \label{cor:delta1}
  Assume that $\{l_i:i\in Z_{n,r}\}=\{l,m\}$ for some positive integer $l < m$, $N_l=1$ and $r \le 3$.
   Then,
  \begin{equation*}
    \delta_{1}^{(l)} > \frac{qN_m}{2(q-1)} \cdot r, 
  \end{equation*}
where $N_m = (n-\gcd(n,(q^l-1)/r))/m$.
\end{corollary}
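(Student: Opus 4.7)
The plan is to reduce this corollary to Lemma~\ref{lem:l=1} by verifying that the additional hypothesis $2m(q-1)\geq qlr$ used there follows automatically from $r\le 3$. Since all other hypotheses of Lemma~\ref{lem:l=1} are already part of the statement of Corollary~\ref{cor:delta1}, nothing else needs to be shown once this inequality is established.

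First I would exploit the divisibility relation between $l$ and $m$. From the assumption $\{l_i:i\in Z_{n,r}\}=\{l,m\}$ with $l<m$ and $l\mid m$, we get $m\geq 2l$, hence
\[
2m(q-1)\;\geq\;4l(q-1).
\]
So it suffices to prove that $4l(q-1)\geq qlr$, equivalently
\[
4(q-1)\;\geq\;qr.
\]

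Next I would split into the three cases $r=1,2,3$ and use the constraint $r\mid q-1$ imposed at the beginning of Section~\ref{sec:pre}. For $r=1$, the inequality becomes $3q\geq 4$, which holds for every prime power $q\geq 2$. For $r=2$, it becomes $2q\geq 4$, and $r=2$ forces $q$ to be odd, so $q\geq 3$ and the inequality is valid. For $r=3$, it becomes $q\geq 4$; since $3\mid q-1$, the smallest permissible value of $q$ is $q=4$, which already gives equality $4(q-1)=12=qr$, and every larger admissible $q$ is fine. This covers all cases $r\le 3$, so the hypothesis of Lemma~\ref{lem:l=1} is fulfilled, and applying that lemma yields the claimed bound $\delta_1^{(l)}>\frac{qN_m}{2(q-1)}\cdot r$.

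There is no real obstacle here; the only mild point to be careful about is remembering that $r$ is constrained to divide $q-1$, which is what saves the $r=3$ case from the borderline $q=2,3$ possibilities. Thus the proof is just a short case check followed by an invocation of Lemma~\ref{lem:l=1}.
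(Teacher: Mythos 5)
Your proposal is correct and follows essentially the same route as the paper: both reduce the corollary to Lemma~\ref{lem:l=1} by verifying $2m(q-1)\ge qlr$ from $m\ge 2l$ together with $r\le 3$ and $r\mid q-1$. The only cosmetic difference is that the paper establishes $\frac{q}{q-1}\cdot r\le r+1\le 4\le 2m/l$ in one uniform chain, whereas you check the equivalent inequality $4(q-1)\ge qr$ by cases on $r=1,2,3$.
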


\begin{proof}
  Since $r \mid q-1$,
  we have $r \leq q-1$.
  So,
  \begin{equation}\label{eq:q=r+1}
    \frac{q}{q-1} \cdot r \leq \frac{r+1}{r} \cdot r=r+1.
  \end{equation}
  Noticing $m/l \geq 2$ and $r \leq 3$,
  we have 
  \begin{equation}\label{eq:2m/l>r+1}
    2m/l \geq 4 \geq r+1.
  \end{equation}
  Combining \eqref{eq:q=r+1} with \eqref{eq:2m/l>r+1},
  we obtain 
  $2m/l \geq \frac{q}{q-1} \cdot r,$
  that is,
  $
  2m(q-1) \geq q lr.
  $
  Thus,
  by Lemma~\ref{lem:l=1},
  we get the desired result. 
\end{proof}

When $l=1$ in Lemma~\ref{lem:deltaNlNm},  we can do a little better. 
One can see this by combining the following lemma with Lemma~\ref{lem:deltaNm}. 

\begin{lemma}  \label{lem:deltaN1Np}
  Assume $\{l_i:i\in Z_{n,r}\}=\{1,m\}$ with $m \ge 2$.
 Then, if $N_1 \ge 2$, we have 
$$\delta_{\lceil N_1/2 \rceil+1}^{(1)}> \delta_{\lceil N_m/2 \rceil}^{(m)};$$
and if $N_1$ is even or $r \leq 2$,  we have
$$\delta_{\lfloor N_1/2 \rfloor +1}^{(1)}> \delta_{\lfloor N_m/2 \rfloor}^{(m)}.$$
\end{lemma}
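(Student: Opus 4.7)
The plan is a counting argument: parametrise the size-$1$ coset leaders in $Z_{n,r}$ as an arithmetic progression, count the elements of $Z_{n,r}$ lying weakly below $X := \delta_{\lceil N_1/2\rceil+1}^{(1)}$, and deduce how many size-$m$ coset leaders must also lie below $X$. By Lemmas~\ref{lem:sumlNl} and~\ref{lem:limidliff2}, the size-$1$ cosets are exactly the singletons at multiples of $t := n/N_1$ inside $Z_{n,r}$, with $N_1 = \gcd(n,(q-1)/r)$ and $\gcd(r,t)=1$; hence these multiples form the progression
$$\delta_i^{(1)} \ =\ u_0 t + (i-1) r t, \qquad 1\le i\le N_1,$$
where $u_0\in\{1,\dots,r\}$ is determined by $u_0 t \equiv 1\pmod r$.

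For the ceiling inequality, I would write $u_0 t = 1 + v r$ with $v \ge 0$, so that $X = 1 + (v + \lceil N_1/2\rceil t)r$, and hence the number of elements of $Z_{n,r}$ with value $\le X$ is $v + \lceil N_1/2\rceil t + 1$. Exactly $\lceil N_1/2\rceil + 1$ of these are size-$1$ leaders, so the remaining
$$v + \lceil N_1/2\rceil (t-1) \ \ge\ \tfrac{N_1(t-1)}{2} \ =\ \tfrac{n-N_1}{2} \ =\ \tfrac{mN_m}{2}$$
elements must lie in size-$m$ cosets. Each size-$m$ coset contains only $m$ elements, so if $L$ denotes the number of size-$m$ leaders $\le X$ then these at least $mN_m/2$ elements must come from at most $L$ cosets, giving $mL \ge mN_m/2$; integrality then upgrades $L \ge N_m/2$ to $L \ge \lceil N_m/2\rceil$. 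This yields $\delta_{\lceil N_m/2\rceil}^{(m)} \le X$, and strict inequality holds because $X$ is itself a size-$1$ leader.

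For the floor inequality, the case $N_1$ even is immediate from the ceiling case, since then $\lfloor N_1/2\rfloor = \lceil N_1/2\rceil$ and $\lceil N_m/2\rceil \ge \lfloor N_m/2\rfloor$. When $r \le 2$ with $N_1$ possibly odd I would rerun the same count with $X' := \delta_{\lfloor N_1/2\rfloor+1}^{(1)}$: here $r=1$ forces $u_0=1$ and $v = t-1$, while $r=2$ forces $t$ odd (by $\gcd(t,r)=1$) and therefore $u_0=1$ with $v=(t-1)/2$. In either case the $v$-contribution precisely compensates for the loss incurred by replacing $\lceil N_1/2\rceil$ by $\lfloor N_1/2\rfloor$, yielding $v + \lfloor N_1/2\rfloor(t-1) \ge mN_m/2$ and hence $L' \ge \lfloor N_m/2\rfloor$, again with strict inequality.

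The main obstacle is keeping the count tight enough across the parity split: for odd $N_1$ the argument needs roughly $v \ge (t-1)/2$ to conclude, a bound which holds automatically when $r \le 2$ (because $u_0 = 1$ is forced), but can fail when $r \ge 3$, which is exactly why the floor version of the lemma imposes the hypothesis $r \le 2$.
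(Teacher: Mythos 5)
Your proof is correct and rests on essentially the same double-counting/pigeonhole idea as the paper's: compare the number of elements of $Z_{n,r}$ lying at or below the relevant size-one leader with the number of elements that the size-$m$ cosets having small leaders can account for, forcing enough size-$m$ leaders below that threshold. The only differences are presentational: you use the exact arithmetic-progression formula for $\delta_i^{(1)}$ and count directly, whereas the paper lower-bounds $\delta^{(1)}_{\lceil N_1/2\rceil+1}$ by $nr/2+1$ via Lemma~\ref{lem:delta1} and argues by contradiction, and your observation that $u_0=1$ is forced when $r\le 2$ is precisely where the paper invokes $r\le 2$ in its estimate \eqref{eq:>nr/23}.
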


\begin{proof}
 Using Lemma~\ref{lem:sumlNl} and noticing  $\{l_i:i\in Z_{n,r}\}=\{1,m\}$,
  we have
  \begin{equation}\label{eq:Nl1p}
     N_1 = \gcd(n,(q -1)/r),  \quad
     N_m = (n-\gcd(n,(q -1)/r))/m.
  \end{equation}

First, we estimate $\delta_{\lceil N_1/2 \rceil+1}^{(1)}$.
Combining \eqref{eq:Nl1p} with Lemma~\ref{lem:delta1},
  we have
  \begin{equation}\label{eq:>nr/21}
    \begin{split}
    \delta_{\lceil N_1/2 \rceil+1}^{(1)} 
    \ge & (\lceil N_1/2 \rceil+1-1) \cdot \frac{nr^2}{\gcd(nr,q-1)} + 1\\
    \geq &\frac{\gcd(n,(q-1)/r)}{2} \cdot \frac{nr^2}{\gcd(nr,q-1)} + 1 \\
    =& \frac{nr}{2} +1.
    \end{split}
  \end{equation}  
So, for our purpose we only need to prove 
\begin{equation}  \label{eq:ceilNp}
\delta_{\lceil N_m/2 \rceil}^{(m)} <  \frac{nr}{2} +1. 
\end{equation}

By contradiction, we suppose $\delta_{\lceil N_m/2 \rceil}^{(m)} \ge  \frac{nr}{2} +1$. 
Then, combining \eqref{eq:Nl1p} with \eqref{eq:>nr/21}, 
we deduce that in the set $Z_{n,r}$, the total number of elements in all the $q$-cyclotomic cosets with coset leader less 
than $\frac{nr}{2} +1$ is at most 
  \begin{equation}\label{eq:N1Np}
    \begin{split}
       \lceil N_1/2 \rceil + m \left( \lceil N_m/2 \rceil -1 \right) 
      & \le \frac{\gcd(n,(q-1)/r)}{2} + \frac{1}{2} + m \cdot \frac{n-\gcd(n,(q-1)/r)}{2m}- \frac{m}{2} \\
  & =\frac{n}{2}- \frac{m-1}{2}.
    \end{split}
  \end{equation}
However, in the set $Z_{n,r}$ the number of elements less than $\frac{nr}{2}+1$ is at least 
$
\lceil n/2 \rceil \ge n/2,
$
which contradicts with \eqref{eq:N1Np} (noticing $m \ge 2$). 
Hence, \eqref{eq:ceilNp} is true, and thus we get  
$\delta_{\lceil N_1/2 \rceil+1}^{(1)}> \delta_{\lceil N_m/2 \rceil}^{(m)}.$

Now, it remains to consider the case when  $N_1$ is odd and $r \leq 2$. 
Since $N_1$ is odd,  
  by Lemma~\ref{lem:delta1} we obtain
  \begin{equation}  \label{eq:>nr/23}
    \begin{split}
      \delta_{\lfloor N_1/2 \rfloor+1}^{(1)}
      & \geq (\lfloor N_1/2 \rfloor+1-1) \cdot \frac{nr^2}{\gcd(nr,q-1)}+ \frac{nr}{\gcd(nr,q-1)}\\
      & = \left(\frac{\gcd(n,(q-1)/r)}{2}- \frac{1}{2}\right) \cdot \frac{nr^2}{\gcd(nr,q-1)}+ \frac{nr}{\gcd(nr,q-1)}\\
      & = \frac{nr}{2} + \frac{nr(2-r)}{2\gcd(nr,q-1)} \ge \frac{nr}{2},
    \end{split}
  \end{equation}
  where the last inequality follows from the assumption $r \leq 2$.
  Hence, when  $N_1$ is odd and $r \leq 2$, 
  for our purpose we only need to prove 
\begin{equation}\label{eq:>nr/24}
   \delta_{\lfloor N_m/2 \rfloor}^{(m)}  <    \frac{nr}{2}.
  \end{equation}

By contradiction, suppose $\delta_{\lfloor N_m/2 \rfloor}^{(m)} \ge nr/2$. 
Then, combining \eqref{eq:Nl1p} with \eqref{eq:>nr/23}, 
we deduce that in the set $Z_{n,r}$, the total number of elements in the $q$-cyclotomic cosets with coset leader less 
than $nr/2$ is at most 
  \begin{equation}\label{eq:N1Np2}
    \begin{split}
       \lfloor N_1/2 \rfloor + m \left( \lfloor N_m/2 \rfloor -1 \right) 
      & \le \frac{\gcd(n,(q-1)/r)}{2} + m \cdot \frac{n-\gcd(n,(q-1)/r)}{2m}- m \\
  & =\frac{n}{2}- m.
    \end{split}
  \end{equation}
However, in the set $Z_{n,r}$ the number of elements less than $nr/2$ is at least 
$
\lfloor n/2 \rfloor - 1  \ge \frac{n}{2}- \frac{3}{2},
$
which contradicts with \eqref{eq:N1Np2} (noticing $m \ge 2$). 
Therefore, \eqref{eq:>nr/24} is true, and so, we obtain 
$
\delta_{\lfloor N_1/2 \rfloor +1}^{(1)}> \delta_{\lfloor N_m/2 \rfloor}^{(m)}.
$
This completes the proof. 
\end{proof}

We remark that if $N_1$ is odd and $r>2$,
then
$\delta_{\lfloor N_1/2 \rfloor +1}^{(1)}$ may be less than
$\delta_{\lfloor N_m/2 \rfloor}^{(m)}$; see Example~\ref{ex:deltaN1Nm}.

\begin{example}\label{ex:deltaN1Nm}
Choosing $q=19$, $n=127$ and $r=18$,
  we have $nr=(q^m-1)/s$ with  $m=3$, $s=3$, 
  $\{l_i,i\in Z_{127,18}\}=\{1,3\}$,
  $N_1=1$ and $N_3=42$. 
Moreover, $\delta_1^{(1)}=127$ and $\delta_{21}^{(3)}=451$. 
So, 
$
\delta_{\lfloor N_1/2 \rfloor +1}^{(1)}=\delta_1^{(1)}<\delta_{21}^{(3)}=\delta_{\lfloor N_3/2 \rfloor}^{(m)}.
$
In addition, $\delta_1^{(1)}$ is also less than the 21-th element 361 in the set $Z_{127,18}$, 
and so, $\delta_1^{(1)}$ is not in the first 21 cyclotomic cosets of size 3 in $Z_{127,18}$.
\end{example}

\subsection{Cyclotomic values}

Recall that for any positive integer $m$, the $m$-th cyclotomic polynomial is defined to be 
$$
\Phi_m(x)= \prod_{
  \substack{j=1\\ \gcd(j,m)=1}}^{m} (x-\zeta_m^j) \in \Z[x], 
$$
where $\zeta_m$ is a primitive $m$-th root of unity.
Besides, $\Phi_m(x)$ can be expressed as 
\begin{equation}\label{eq:Phimx}
  \Phi_m(x)=\prod_{d \mid m} (x^d-1)^{\mu(m/d)}.
\end{equation}

The following result is in \cite[Proposition 2.8]{Washington} about $\Phi_m(1)$.

\begin{lemma}[\cite{Washington}] \label{lem:Phi(1)}
  If $m$ has at least two distinct prime factors, then $\Phi_m(1)=1$.  
\end{lemma}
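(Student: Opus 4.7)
The plan is to obtain a closed-form evaluation of $\Phi_m(1)$ by working carefully with the Möbius product formula \eqref{eq:Phimx}. Since each factor $(x^d-1)^{\mu(m/d)}$ is singular at $x=1$, direct substitution is not legal. I would first rewrite $x^d - 1 = (x-1)(1 + x + \cdots + x^{d-1})$, so that \eqref{eq:Phimx} becomes
$$
\Phi_m(x) = (x-1)^{\sum_{d\mid m}\mu(m/d)} \prod_{d\mid m}(1 + x + \cdots + x^{d-1})^{\mu(m/d)}.
$$
The standard Möbius identity yields $\sum_{d\mid m}\mu(m/d) = 0$ whenever $m > 1$, so the leading $(x-1)$-factor disappears, and every remaining factor $1 + x + \cdots + x^{d-1}$ evaluates at $x = 1$ to $d$. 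This removes the indeterminacy.

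Setting $x = 1$ in the resulting expression then gives
$$
\Phi_m(1) = \prod_{d\mid m} d^{\mu(m/d)}.
$$
Taking logarithms,
$$
\log \Phi_m(1) = \sum_{d\mid m}\mu(m/d) \log d,
$$
which is precisely the von Mangoldt function $\Lambda(m)$, equal to $\log p$ when $m = p^k$ is a prime power and vanishing otherwise. Under the hypothesis that $m$ has at least two distinct prime divisors we therefore obtain $\log \Phi_m(1) = 0$, hence $\Phi_m(1) = 1$.

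The main obstacle is the purely formal one of justifying the substitution $x = 1$ in \eqref{eq:Phimx}: each factor is individually singular, and the argument hinges on showing that all the $(x-1)$-singularities cancel out (via $\sum_{d\mid m}\mu(m/d)=0$) \emph{before} substituting. Once that is in hand, the identification of the resulting sum with $\Lambda(m)$ is a standard Möbius-inversion computation. As an alternative, more elementary route, one could instead argue inductively starting from $x^m - 1 = \prod_{d\mid m} \Phi_d(x)$: dividing by $x-1$ and setting $x=1$ gives $m = \prod_{d\mid m,\, d>1} \Phi_d(1)$. Since $\Phi_{p^i}(x) = 1 + x^{p^{i-1}} + \cdots + x^{(p-1)p^{i-1}}$ yields $\Phi_{p^i}(1) = p$, the prime-power divisors already account for the full contribution $\prod_{p\mid m} p^{v_p(m)} = m$, and an induction on the number of distinct prime factors then forces $\Phi_d(1)=1$ for every $d\mid m$ with at least two distinct prime divisors, including $d=m$ itself.
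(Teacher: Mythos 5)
Your argument is correct, but note that the paper offers no proof of this lemma at all: it is quoted verbatim from \cite[Proposition 2.8]{Washington}, so there is no in-paper proof to compare against. Your main route is sound: the identity $\Phi_m(x)=\prod_{d\mid m}(x^d-1)^{\mu(m/d)}$ holds as rational functions, factoring out $(x-1)^{\sum_{d\mid m}\mu(m/d)}=(x-1)^0$ (valid since $m>1$) leaves a ratio of polynomials that is regular and nonzero at $x=1$, and the resulting evaluation $\Phi_m(1)=\prod_{d\mid m}d^{\mu(m/d)}=e^{\Lambda(m)}$ (von Mangoldt; not to be confused with the paper's $\Lambda_m$) vanishes in the exponent exactly when $m$ is not a prime power. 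Taking logarithms is legitimate because the product $\prod_{d\mid m}d^{\mu(m/d)}$ is manifestly positive. Your alternative route --- evaluating $1+x+\cdots+x^{m-1}=\prod_{d\mid m,\,d>1}\Phi_d(x)$ at $x=1$, noting $\Phi_{p^i}(1)=p$ so the prime-power divisors already contribute $\prod_p p^{v_p(m)}=m$, and inducting on $m$ so that all proper divisors with two or more prime factors are already known to give $1$, forcing $\Phi_m(1)=1$ --- is essentially the classical textbook proof (and the one in Washington), and it has the small advantage of never leaving the integers. Either version would serve as a self-contained replacement for the citation.
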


Next, we determine $\gcd(\Phi_m(q),q-1)$. 

\begin{lemma}\label{lem:gcdPhiq-1}
We have
  \begin{align*}
    \gcd(\Phi_m(q),q-1)
    =\begin{cases}
      1,& m \text{ has at least two distinct prime factors},\\
      \gcd(q-1,p), & m \text{ is a power of a prime } p.
    \end{cases}
  \end{align*}
\end{lemma}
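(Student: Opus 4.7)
The plan is to reduce the computation of $\gcd(\Phi_m(q),q-1)$ to the evaluation $\Phi_m(1)$, which is a known (or easily computed) quantity, and then invoke Lemma~\ref{lem:Phi(1)} for the first case.

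First I would observe the elementary fact that for any polynomial $f(x)\in\Z[x]$, one has $f(q)\equiv f(1)\pmod{q-1}$, since $q^k-1$ is divisible by $q-1$ for every $k\ge 0$. Applying this to $f=\Phi_m$ (which lies in $\Z[x]$ by definition), we conclude
\[
\gcd(\Phi_m(q),q-1)=\gcd(\Phi_m(1),q-1).
\]
Thus the whole proof collapses to determining $\Phi_m(1)$.

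For the case when $m$ has at least two distinct prime divisors, Lemma~\ref{lem:Phi(1)} gives $\Phi_m(1)=1$ directly, so $\gcd(\Phi_m(q),q-1)=\gcd(1,q-1)=1$, as claimed.

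For the case $m=p^b$ with $p$ prime and $b\ge 1$, I would use the well-known closed form
\[
\Phi_{p^b}(x)=\frac{x^{p^b}-1}{x^{p^{b-1}}-1}=1+x^{p^{b-1}}+x^{2p^{b-1}}+\cdots+x^{(p-1)p^{b-1}},
\]
which follows from \eqref{eq:Phimx} (or directly from the factorization of $x^{p^b}-1$ over $\Q$). Setting $x=1$ yields $\Phi_{p^b}(1)=p$, hence $\gcd(\Phi_m(q),q-1)=\gcd(p,q-1)$, matching the statement. There is essentially no obstacle here; the only thing to watch is making the congruence $f(q)\equiv f(1)\pmod{q-1}$ explicit so that the reduction is unambiguous, and stating the prime-power evaluation of $\Phi_{p^b}$ cleanly from \eqref{eq:Phimx}.
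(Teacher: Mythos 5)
Your proof is correct and follows essentially the same route as the paper: both cases rest on Lemma~\ref{lem:Phi(1)} and the expansion $\Phi_{p^b}(q)=1+q^{p^{b-1}}+\cdots+q^{(p-1)p^{b-1}}\equiv p\pmod{q-1}$. Your uniform reduction $\gcd(\Phi_m(q),q-1)=\gcd(\Phi_m(1),q-1)$ via $q\equiv 1\pmod{q-1}$ is a slight streamlining of the paper's argument, which instead handles the first case by contradiction with a prime divisor of the gcd, but the underlying idea is identical.
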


\begin{proof}
  First, we assume that $m$ has at least two distinct prime factors.
  By contradiction, suppose $\gcd(\Phi_m(q),q-1)>1$. 
  Then, there exists a prime $p$ such that $p \mid \gcd(\Phi_m(q),q-1)$.
  However, by Lemma~\ref{lem:Phi(1)} and noticing $p \mid q-1$, we have
  $$
  \Phi_m(q) \equiv \Phi_m(1) \equiv 1 \pmod{p},
  $$
  which contradicts with $p \mid \Phi_m(q)$.
  Thus, $\gcd(\Phi_m(q),q-1)=1$.

  Now, we assume $m=p^{j+1}$ for some non-negative integer $j$.
  By \eqref{eq:Phimx},  we obtain 
  \begin{align*}
      \Phi_m(q)  =\frac{q^{p^{j+1}}-1}{q^{p^j}-1} 
    &=q^{p^{j}(p-1)}+q^{p^j(p-2)}+\cdots+q^{p^j}+1\\ 
    &\equiv 1+1+\cdots+1+1 \equiv p \pmod{q-1}. 
  \end{align*}
  So, we have
  $
  \gcd(\Phi_m(q),q-1)=\gcd(q-1,p).
  $
  This proves the second part of the lemma.
\end{proof}

\subsection{The BCH bound}
The BCH lower bound is an important bound for the minimun distance of BCH codes,
which can be extended for constacyclic codes; see \cite[Theorem 2.2]{KS}.

\begin{lemma}[\cite{KS}] \label{lem:bchbound}
  If the defining set $T$ of a $\lambda$-constacyclic code $\C$ over $\F_q$ includes $1, 1+r,1+2r,\dots,1+dr$,
  then the minimun distance $d(\C) \geq d+2$. 
\end{lemma}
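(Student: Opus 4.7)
The plan is to adapt the classical BCH bound argument via a Vandermonde determinant. Let $\beta$ be a primitive $nr$-th root of unity as in Section~\ref{sec:pre}, so that $\lambda=\beta^n$, and suppose, toward a contradiction, that there exists a nonzero codeword $c(x) = \sum_{k=1}^{w} c_{i_k} x^{i_k} \in \C$ of Hamming weight $w \le d+1$, where $c_{i_k}\neq 0$ and $0 \le i_1 < i_2 < \cdots < i_w \le n-1$.

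Since $1, 1+r, 1+2r, \dots, 1+dr$ all lie in $T(\C)$, the generator polynomial $g(x)$ vanishes at each $\beta^{1+jr}$ for $j = 0, 1, \dots, d$, and hence so does $c(x)$. Writing out the first $w$ of these equations yields the linear system
\begin{equation*}
\sum_{k=1}^{w} c_{i_k}\, \beta^{\,i_k(1+jr)} = 0, \qquad j = 0, 1, \dots, w-1.
\end{equation*}
The key observation is the factorization $\beta^{i_k(1+jr)} = \beta^{i_k}\cdot(\beta^{i_k r})^j$, which allows us to split the coefficient matrix as $M = V \cdot D$, where $D = \mathrm{diag}(\beta^{i_1}, \dots, \beta^{i_w})$ and $V = ((\beta^{i_k r})^j)_{0 \le j \le w-1,\, 1\le k \le w}$ is a Vandermonde matrix in the quantities $\beta^{i_1 r}, \dots, \beta^{i_w r}$.

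The next step is to argue that $M$ is invertible. Since $\beta$ has multiplicative order $nr$, the element $\beta^{r}$ has order exactly $n$, and so the map $i \mapsto \beta^{ir}$ is injective on $\{0,1,\dots,n-1\}$. In particular, the values $\beta^{i_1 r},\dots,\beta^{i_w r}$ are pairwise distinct, making $\det V \neq 0$. Combined with the obvious nonvanishing of $\det D$, this forces $M$ to be nonsingular, so the only solution to the system is $c_{i_1}=\cdots=c_{i_w}=0$, contradicting the choice of codeword. Therefore every nonzero codeword has weight at least $d+2$, giving $d(\C) \ge d+2$.

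There is no real obstacle here; the only subtlety worth being explicit about is verifying that $\beta^{r}$ is a primitive $n$-th root of unity (so that the Vandermonde nodes are distinct), since this is precisely the point where the constacyclic setup differs from the cyclic case $r=1$. Everything else is a direct transcription of the standard BCH argument.
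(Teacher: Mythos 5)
Your argument is correct: the paper does not prove this lemma but simply cites \cite{KS}, and your Vandermonde-determinant proof is precisely the standard argument behind that cited result, with the one genuinely constacyclic point (that $\beta^{r}$ has exact order $n$, so the nodes $\beta^{i_1 r},\dots,\beta^{i_w r}$ are distinct for $0\le i_1<\cdots<i_w\le n-1$) handled explicitly. Nothing is missing.
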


\subsection{Equivalence between constacyclic codes and cyclic codes}

It is well-known that a $\lambda$-constacyclic code over $\F_q$ is 
monomially equivalent to a cyclic code of the same length over the field $\F_q(\alpha)$, 
where $\alpha$ is an $n$-th root of $\lambda$; see \cite[Theorem 2.1]{ASR}. 
The following lemma is the special case when $\alpha \in \F_q$; see  \cite[Corollary 2.1 and Lemma 2.1]{ASR}. 
Recall that $\lambda \in \F_q^*$ and $r$ is the multiplicative order of $\lambda$. 

\begin{lemma}[\cite{ASR}] \label{lem:equivalence}
  If $\gcd(n,q-1) \mid \frac{q-1}{r}$, then a $\lambda$-constacyclic code of length $n$ over
$\F_q$ is monomially equivalent to a cyclic code of length $n$ over $\F_q$.  
\end{lemma}

From Lemma \ref{lem:equivalence}, we get the following corollary. 

\begin{corollary} \label{cor:equivalence}
  If $\gcd(r, \frac{nr}{\gcd(nr,q-1)})=1$, then a $\lambda$-constacyclic code of length $n$ over
$\F_q$ is monomially equivalent to a cyclic code of length $n$ over $\F_q$.  
\end{corollary}

\begin{proof}
Since $r \mid q-1$, we have  $\gcd(r, \frac{nr}{\gcd(nr,q-1)})= \gcd(r, \frac{n}{\gcd(n,(q-1)/r)})$. 
So, by assumption we have $\gcd(r, \frac{n}{\gcd(n,(q-1)/r)})=1$. 
Then, we get 
\begin{align*}
\gcd(n,q-1) = \gcd(n,r \cdot (q-1)/r)
& = \gcd(n, \frac{q-1}{r}) \cdot 
\gcd(\frac{n}{ \gcd(n, \frac{q-1}{r})}, r ) \\
& = \gcd(n, \frac{q-1}{r}),
\end{align*}
which implies $\gcd(n,q-1) \mid \frac{q-1}{r}$. 
Hence, the desired result follows from Lemma~\ref{lem:equivalence}. 
\end{proof}

We remark that the condition $\gcd(r, \frac{nr}{\gcd(nr,q-1)})=1$ 
in Corollary~\ref{cor:equivalence} is equivalent to that $N_1 >0$ 
(that is, there exists a $q$-cyclotomic coset with one element in $Z_{n,r}$ modulo $nr$); 
see Lemma~\ref{lem:sumlNl}.

\section{Some general constructions for constacyclic codes} \label{sec:cons}

In this section, motivated by \cite{SLD} and generalizing the constructions in \cite{CW, LGLS, LWS, SL, SLD},  
to construct a $q$-ary constacyclic code, 
we first classify the $q$-cyclotomic cosets in $Z_{n,r}$ according to their sizes, 
and then we take the defining set $T$ as the union of the first half of cyclotomic cosets in each class 
(sorted in ascending order by their coset leaders). 

Recall that $q$ is a prime power,
$n$ is a positive integer with $\gcd(n,q)=1$, and $r \mid q-1$. 

For simplicity, we still denote $N_l = N_l^{(q,n,r)}$ and $\delta_i^{(l)} = \delta_i^{(q,n,r,l)}$. 
By definition,  $N_l$ is the number of $q$-cyclotomic cosets with $l$ elements in the set $Z_{n,r}$, 
and $\delta_i^{(l)}$ is the $i$-th coset leader (in ascending order) in those $q$-cyclotomic cosets with $l$ elements in  $Z_{n,r}$.  

Recall that for any $i \in Z_{n,r}$, $l_i = |C_i|$, which is the size of the $q$-cyclotomic coset 
generated by $i$ modulo $nr$ (we denote $C_i = C_i^{(q,nr)}$). 

\begin{theorem}\label{thm:NlNm}
  Write $nr=(q^m-1)/s$, and 
  assume $\{l_i:i \in Z_{n,r}\}=\{l,m\}$ for some positive integer $l < m$.
 Then, an infinite family of $q$-ary $[n,n-l \lceil N_l/2 \rceil - m \lceil N_m/2 \rceil,\ge \lfloor \frac{q N_m}{2(q-1)} \rfloor]$ constacyclic  codes is constructed, where 
  $$
  N_l=\gcd(n,(q^l-1)/r)/l \quad \textrm{and} \quad N_m=(n-\gcd(n,(q^l-1)/r))/m. 
  $$ 
 Besides, if
  \begin{itemize}
    \item [{\rm (1)}] $N_l>1$, or
    \item [{\rm (2)}] $N_l=1$ and $2m(q-1) \geq qlr$, or 
    \item [{\rm (3)}] $r \leq 3$, 
  \end{itemize} 
  then an infinite family of $q$-ary $[n,n-l \lfloor N_l/2 \rfloor - m \lfloor N_m/2 \rfloor,\ge \lfloor \frac{qN_m}{2(q-1)} \rfloor]$  constacyclic codes is constructed.
\end{theorem}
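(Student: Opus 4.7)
The plan is to realize the code as a $\lambda$-constacyclic code whose defining set $T$ is the union of the first $\lceil N_l/2 \rceil$ coset leaders among the $l$-cosets together with the first $\lceil N_m/2 \rceil$ coset leaders among the $m$-cosets in $Z_{n,r}$ (with floors in place of ceilings for the second conclusion), and then apply the BCH-type bound of Lemma~\ref{lem:bchbound}. The values of $N_l$ and $N_m$ are extracted directly from Lemma~\ref{lem:sumlNl}: applying it at the divisor $l$ collapses the sum to $lN_l = \gcd(n,(q^l-1)/r)$, and applying it at $m$ gives $lN_l + mN_m = \gcd(n,(q^m-1)/r) = n$, where the last equality uses $(q^m-1)/r = ns$. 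The generator polynomial is the product of the minimal polynomials $M_{\beta^i}(x)$ over the selected coset leaders, so the codimension equals $|T| = l\lceil N_l/2 \rceil + m\lceil N_m/2 \rceil$ (or its floor analogue), giving the stated dimensions.

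Set $D := \lfloor qN_m/(2(q-1)) \rfloor$. By Lemma~\ref{lem:bchbound} it suffices to show that $1, 1+r, 1+2r, \ldots, 1+(D-2)r$ all lie in $T$. For each such $x = 1 + ir$, write $\delta(x)$ for the coset leader of $C_x^{(q,nr)}$; since $\delta(x) \leq x$, the membership $x \in T$ is equivalent to $\delta(x)$ sitting among the first $\lceil N_{l_x}/2 \rceil$ (respectively $\lfloor N_{l_x}/2 \rfloor$) coset leaders of size $l_x := |C_x|$, and for that it is enough to bound the $(\lceil\cdot\rceil{+}1)$-st or $(\lfloor\cdot\rfloor{+}1)$-st coset leader from below by $x$.

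The case $l_x = m$ is handled uniformly by Lemma~\ref{lem:deltaNm}, which gives $\delta^{(m)}_{\lfloor N_m/2 \rfloor + 1} > 1 + (qN_m/(2(q-1)) - 2)r \geq 1 + (D-2)r \geq x$, so $\delta(x) \leq x < \delta^{(m)}_{\lfloor N_m/2 \rfloor + 1} \leq \delta^{(m)}_{\lceil N_m/2 \rceil + 1}$ and $x \in T$ in both the ceiling and floor versions. The case $l_x = l$ splits: when $N_l \geq 2$, Lemma~\ref{lem:deltaNlNm} yields $\delta^{(l)}_{\lfloor N_l/2 \rfloor + 1} > qN_m r/(2(q-1))$, and the short estimate $x \leq 1 + (qN_m/(2(q-1)) - 2)r < qN_m r/(2(q-1))$ (valid because $r \geq 1$) places $x$ below that threshold; when $N_l = 1$ the ceiling automatically includes the unique $l$-coset, while under condition (2) or (3) of the floor statement Lemma~\ref{lem:l=1} or Corollary~\ref{cor:delta1} pushes that unique $l$-coset leader past $qN_m r/(2(q-1)) > x$, so $l_x = l$ simply cannot occur in the range considered.

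The main subtlety will be the subcase $N_l = 1$ in the floor construction: there $\lfloor N_l/2 \rfloor = 0$ discards every $l$-coset from $T$, so one must rule out the possibility that some small consecutive element $1 + ir$ happens to lie in the unique $l$-coset. The conditions $2m(q-1) \geq qlr$ in case~(2) and $r \leq 3$ in case~(3) are introduced precisely to guarantee, via Lemma~\ref{lem:l=1} and Corollary~\ref{cor:delta1} respectively, that this $l$-coset leader lies strictly to the right of the BCH window; once this is secured, the remainder of the argument is a uniform BCH application, and the formula for $\delta(x) \leq x < \delta^{(l)}$ closes the case analysis.
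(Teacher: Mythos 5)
Your proposal is correct and follows essentially the same route as the paper: the same defining set (first $\lceil N_l/2\rceil$ and $\lceil N_m/2\rceil$, resp.\ floor, cosets of each size), the same computation of $N_l$ and $N_m$ from Lemma~\ref{lem:sumlNl}, and the same chain of Lemmas~\ref{lem:deltaNm}, \ref{lem:deltaNlNm}, \ref{lem:l=1} and Corollary~\ref{cor:delta1} feeding into the BCH bound of Lemma~\ref{lem:bchbound}. If anything, your write-up is slightly more explicit than the paper about why the floor construction with $N_l=1$ needs conditions (2) or (3) to exclude the lone $l$-coset from the BCH window.
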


\begin{proof}
  Since  $\{l_i:i \in Z_{n,r}\}=\{l,m\}$ with $l < m$, by Lemma~\ref{lem:sumlNl} 
  we have 
  $$
  N_l=\gcd(n,(q^l-1)/r)/l \quad \textrm{and} \quad N_m=(n-\gcd(n,(q^l-1)/r))/m.
  $$

First, take the defining set $T$ as the union of the first $\lceil N_l/2 \rceil$ $q$-cyclotomic cosets 
 contained in $Z_{n,r}$ with $l$ elements
  and the first $\lceil N_m/2 \rceil$ $q$-cyclotomic cosets 
 contained in $Z_{n,r}$ with $m$ elements.
  So, there are $l \lceil N_l/2 \rceil + m \lceil N_m/2 \rceil$ integers in $T$,
  which implies that the dimension of the corresponding constacyclic code $\C$ is 
$
n-l \lceil N_l/2 \rceil - m \lceil N_m/2 \rceil.
$
  
  Next,   we prove the desired lower bound for the minimun distance $d(\C)$. 
Since $d(\C) \ge 2$, we can assume the desired lower bound $\lfloor \frac{q N_m}{2(q-1)} \rfloor \ge 2$ without loss of generality.  
When $N_l > 1$, 
  by Lemmas~\ref{lem:deltaNm} and \ref{lem:deltaNlNm},
  we have 
  $$
  \delta_{\lceil N_l/2 \rceil+1}^{(l)} > \frac{q N_m}{2(q-1)} \cdot r>1+\left(\frac{q N_m}{2(q-1)}-2\right)r,
  $$ 
  $$
  \delta_{\lceil N_m/2 \rceil+1}^{(m)} > 1+\left(\frac{q N_m}{2(q-1)}-2\right)r. 
  $$
By construction, this means that for any non-negative integer $i \le \frac{q N_m}{2(q-1)}-2$,
 we have $1+ir \in T$. 
So, by Lemma~\ref{lem:bchbound} we get 
$
d(\C) \ge \left\lfloor \frac{qN_m}{2(q-1)} \right\rfloor.
$
Similarly, the case when $N_l=1$ follows from Lemmas~\ref{lem:deltaNm} and \ref{lem:bchbound}. 
So, we prove the first part of the theorem.

  Now, take the defining set $T$ as the union of the first $\lfloor N_l/2 \rfloor$ $q$-cyclotomic cosets contained in $Z_{n,r}$ with $l$ elements
  and the first $\lfloor N_m/2 \rfloor$ $q$-cyclotomic cosets contained in $Z_{n,r}$ with $m$ elements.
Then, under one of the conditions (1), (2) and (3) and using 
Lemmas~\ref{lem:deltaNm}, \ref{lem:deltaNlNm} and~\ref{lem:l=1} and Corollary~\ref{cor:delta1},
  an infinite family of constacyclic $[n,n-l \lfloor N_l/2 \rfloor - m \lfloor N_m/2 \rfloor,\ge \lfloor \frac{qN_m}{2(q-1)} \rfloor]$ codes is constructed.
\end{proof}

\begin{remark}   \label{rem:cyclic}
In Theorem~\ref{thm:NlNm}, if moreover $l=1$, then $N_1 > 0$; 
and so, by Lemma~\ref{lem:sumlNl} we have  $\gcd(r, \frac{nr}{\gcd(nr,q-1)})=1$. 
Then, by Corollary~\ref{cor:equivalence} we know that the constacyclic codes 
in Theorem~\ref{thm:NlNm} are monomially equivalent to some cyclic codes over $\F_q$.
\end{remark}

We also remark that Theorem~\ref{thm:NlNm} includes the codes constructed in \cite[Theorems 10, 15 and 16]{CW} and also improves the lower bounds on minimum distances there. 

We notice that  in Theorem~\ref{thm:NlNm}, the quantity $N_m$ satisfies 
$
N_m \ge \frac{n}{2m} = \frac{n}{2\log_q(nrs+1)}. 
$
This means that when $s \le n^{10}$, the minimum distance $d(\C)$ is greater than $cn/\log_q n$ 
for some positive constant $c$ depending only on $q$.  

We emphasize that the codes in Theorem~\ref{thm:NlNm} are not always BCH constacyclic codes; 
see the following example for the case when $r=1$. 

\begin{example}   \label{ex:not-BCH} 
    Let $q=5$, $n=26$, and $r=1$. Then, 
  the $5$-cyclotomic cosets modulo $26$ are: 
  $
  C_{1}=\{1, 5, 21, 25\}, 
  C_{2}=\{2, 10, 16, 24\}, 
  C_{3}=\{3, 11, 15, 23\}, 
  C_{4}=\{4, 6, 20, 22\}, 
   C_{7}=\{7, 9, 17, 19\}, 
  C_{8}=\{8, 12, 14, 18\},
  C_{13}=\{13\}, 
  C_{26}=\{26\}.
  $
  Then, $N_1 = 2, N_4 =6$. 
  So, in Theorem~\ref{thm:NlNm}, 
  the defining set is $C_1 \cup C_2 \cup C_3 \cup  C_{13}$,
   and the relevant code is not a BCH code. 
\end{example}

Next, by extending the strategy in \cite[Theorem 10]{SLD}, we establish a better lower bound for the minimum distance in some special cases. 

\begin{theorem}  \label{thm:NlNm2}
  Write $nr=(q^m-1)/s$, and 
  assume $\{l_i:i \in Z_{n,r}\}=\{l,m\}$ for some positive integer $l<m$. 
 Then, if either $l=1$ or $N_l = 1$,  for the first family of $q$-ary constacyclic codes in Theorem~\ref{thm:NlNm}, the lower bound on their minimum distances can be 
$$ 
\left\lceil \frac{q \lceil N_m/2 \rceil}{q-1} \right\rceil + \Lambda_m,
$$
where $N_m$ is given in Theorem~\ref{thm:NlNm}, 
$
 \Lambda_m = \left\lceil \frac{q|S_m|}{q-1} \right\rceil - 1, 
$
 and 
\begin{equation*}
\begin{split}
 S_{m} = \Big\{(i,j,t):\ & \frac{m+1}{2} \le i \le \lfloor m - \log_q(s+1) \rfloor, \\ 
& 1 \le j \le q-1, \
 0 \le t <  \frac{q^{i}-j(q^{m-i}-1)-1}{q^{m-i+1}-q},  \\
& q^i+qt+j \equiv 1 \pmod{r}, \\
& q^i+qt+j \le 1 + \left(\lceil q\lceil N_m/2 \rceil/(q-1) \rceil -2 \right) r \Big\}; 
\end{split}
\end{equation*} 
and in addition, if $l=1$ and $r \le 2$,   for the second family of $q$-ary constacyclic codes in Theorem~\ref{thm:NlNm}, the lower bound on their minimum distances can be 
$$ 
\left\lceil \frac{q \lfloor N_m/2 \rfloor}{q-1} \right\rceil  + \Lambda_m^\prime, 
$$
where 
$
 \Lambda_m^\prime = \left\lceil \frac{q|S_m^\prime|}{q-1} \right\rceil - 1, 
$
and the set $S_m^\prime$ is obtained from $S_m$ by replacing $\lceil N_m/2 \rceil$ with $\lfloor N_m/2 \rfloor$.
\end{theorem}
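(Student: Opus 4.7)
The plan is to refine the counting argument of Theorem~\ref{thm:NlNm} by injecting the non-leader information supplied by Lemma~\ref{lem:NotLeader}. Under the hypothesis $\{l_i:i\in Z_{n,r}\}=\{1,m\}$, Lemma~\ref{lem:limidliff2} shows that every element of $Z_{n,r}$ lying in a size-$m$ cyclotomic coset is a multiple of $t=nr/\gcd(nr,q^{m}-1)$, and the leader of such a coset cannot be divisible by $q$; hence the candidate positions for size-$m$ coset leaders in $Z_{n,r}$ are exactly the multiples of $t$ in $Z_{n,r}$ that avoid multiples of $q$, and Lemma~\ref{lem:ar} controls precisely how densely these can be packed in an arithmetic progression of step $r$. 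Each triple $(i,j,t^\prime)\in S_m$ supplies one such candidate $q^{i}+qt^\prime+j$ that is provably \emph{not} a coset leader, so subtracting the $|S_m|$ known non-leaders from the packing count of Lemma~\ref{lem:ar} sharpens the lower bound on $\delta^{(m)}_{\lceil N_m/2\rceil+1}$ and thereby lengthens the BCH-run inside $T$ beyond what Theorem~\ref{thm:NlNm} provides.

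For the first family, I would first check that each element $q^{i}+qt^\prime+j$ with $(i,j,t^\prime)\in S_m$ is a legitimate candidate: the clause $q^{i}+qt^\prime+j\equiv 1\pmod r$ in the definition of $S_m$ places it in $Z_{n,r}$; its residue modulo $q$ equals $j\in\{1,\ldots,q-1\}$, so it is not divisible by $q$; and Lemma~\ref{lem:limidliff2} forces it to be a multiple of $t$ because $\{l_i\}=\{1,m\}$. The final distinctness assertion at the end of Lemma~\ref{lem:NotLeader} ensures these $|S_m|$ non-leader positions are all different. Re-running the estimate in the proof of Lemma~\ref{lem:delta1} with the $|S_m|$ excised slots yields an improvement of the form
\begin{equation*}
\delta^{(m)}_{\lceil N_m/2 \rceil + 1} \;\ge\; 1 + \Bigl(\Bigl\lceil \tfrac{q\lceil N_m/2\rceil}{q-1}\Bigr\rceil + \Lambda_m - 1\Bigr)r,
\end{equation*}
the extra $\Lambda_m$ steps coming from applying Lemma~\ref{lem:ar} to the excised candidates. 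The size-$1$ coset leaders below the new threshold are still covered by the first part of Lemma~\ref{lem:deltaN1Np}, which continues to apply since the comparison point $\delta^{(m)}_{\lceil N_m/2\rceil}$ is unchanged.

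It remains to verify that every $q^{i}+qt^\prime+j\in S_m$ itself lies in $T$, i.e., that its cyclotomic coset is among the first $\lceil N_m/2\rceil$ size-$m$ cosets built into $T$. The proof of Lemma~\ref{lem:NotLeader} exhibits the smaller coset-representative $(qt^\prime+j)q^{m-i}+1$, and the window condition $q^{i}+qt^\prime+j\le 1+(\lceil q\lceil N_m/2\rceil/(q-1)\rceil-2)r$ in the definition of $S_m$ forces this representative to lie in $\{\delta^{(m)}_1,\ldots,\delta^{(m)}_{\lceil N_m/2\rceil}\}\subset T$. Combining the previous steps, $T$ then contains the full run $1,1+r,\ldots,1+(\lceil q\lceil N_m/2\rceil/(q-1)\rceil+\Lambda_m-2)r$, and Lemma~\ref{lem:bchbound} delivers the claimed lower bound on $d(\C)$. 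The second family under $r\le 2$ is identical after replacing $\lceil N_m/2\rceil$ by $\lfloor N_m/2\rfloor$ throughout, with the second part of Lemma~\ref{lem:deltaN1Np} (whose hypothesis $r\le 2$ is exactly what is imposed here) handling the size-$1$ leaders in the floor case.

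The hard part will be the arithmetic bookkeeping. Since the ceiling function is only subadditive, turning the refined estimate for $\delta^{(m)}_{\lceil N_m/2\rceil+1}$ into the precise combination $\lceil q\lceil N_m/2\rceil/(q-1)\rceil+\Lambda_m$ appearing in the statement, rather than the merely comparable expression $\lceil q(\lceil N_m/2\rceil+|S_m|+1)/(q-1)\rceil-1$ that drops out of the counting, requires a careful accounting of fractional parts; the definition $\Lambda_m=\lceil q|S_m|/(q-1)\rceil-1$ is clearly engineered so that this accounting closes.
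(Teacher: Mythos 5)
Your proposal is correct and follows essentially the same route as the paper's proof: excise the $|S_m|$ non-leaders supplied by Lemma~\ref{lem:NotLeader} from the slot-packing count of Lemma~\ref{lem:ar}, push up the lower bound on the relevant size-$m$ coset leader, cover the size-one cosets via Lemma~\ref{lem:deltaN1Np}, and conclude with Lemma~\ref{lem:bchbound}. The only cosmetic differences are that the paper runs the excision directly on $\delta^{(m)}_{\lceil N_m/2\rceil}$ rather than on $\delta^{(m)}_{\lceil N_m/2\rceil+1}$ (which is also what you implicitly need so that the inequality $\delta^{(1)}_{\lceil N_1/2\rceil+1}>\delta^{(m)}_{\lceil N_m/2\rceil}$ really does cover the size-one leaders up to the new threshold), and the ceiling bookkeeping you flag as the hard part closes in one line via $\lceil x\rceil+\lceil y\rceil-1\le\lceil x+y\rceil$.
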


\begin{proof}
Recall that for the first family of constacyclic codes in Theorem~\ref{thm:NlNm}, 
their defining set $T$ is the union of the first $\lceil N_1/2 \rceil$ cyclotomic cosets contained in $Z_{n,r}$ with one element
  and the first $\lceil N_m/2 \rceil$ cyclotomic cosets contained in $Z_{n,r}$ with $m$ elements. 

By construction and by the first part of Lemma~\ref{lem:deltaN1Np}, we know that when $l=1$ and $N_1 \ge 2$, all the cyclotomic cosets of $Z_{n,r}$  with coset leader at most $\delta_{\lceil N_m/2 \rceil}^{(m)}$ are in $T$. 
When $N_l=1$, clearly this fact also holds.

Moreover, by Lemma~\ref{lem:delta1} we get 
\begin{align*}
  \delta^{(m)}_{\lceil N_m/2 \rceil}  
 \ge \frac{nr(1 + (\lceil q\lceil N_m/2 \rceil/(q-1) \rceil -2) r)}{\gcd(nr,q^m-1)} 
  = 1 + (\lceil q\lceil N_m/2 \rceil/(q-1) \rceil -2) r.
\end{align*}
Notice that the above lower bound is achieved only when 
 all the elements not divisible by $q$ and at most $1 + \left(\lceil q\lceil N_m/2 \rceil/(q-1) \rceil -2 \right) r$ in 
$Z_{n,r}$ are coset leaders (because in proving Lemma~\ref{lem:delta1}, we apply Lemma~\ref{lem:ar}). 
However, this does not always hold. 
By Lemma~\ref{lem:NotLeader}, we know that there are at least $|S_m|$ such elements which are not coset leaders, where the set $S_m$ has been defined in the theorem. 
Hence, combining this with Lemma~\ref{lem:ar}, we obtain 
\begin{equation}  \label{eq:deltaNp2}
\begin{split}
  \delta^{(m)}_{\lceil N_m/2 \rceil}  & \ge  1 + \left(\lceil q\lceil N_m/2 \rceil/(q-1) \rceil - 1 \right) r 
 + (\lceil q|S_m|/(q-1)\rceil -2 )r \\ 
& \ge 1 + \left(\lceil q\lceil N_m/2 \rceil/(q-1) \rceil + \lceil q|S_m|/(q-1)\rceil - 3 \right) r.
\end{split}
\end{equation}

Hence, using \eqref{eq:deltaNp2} and applying Lemma~\ref{lem:bchbound}, we get the desired lower bound 
for the minimum distances of  the first family of constacyclic codes in Theorem~\ref{thm:NlNm}. 

Finally, when $l=1$ and $r \le 2$, using the second part of Lemma~\ref{lem:deltaN1Np} and applying similar arguments as the above, we get the desired lower bound 
for the minimum distances of  the second family of constacyclic codes in Theorem~\ref{thm:NlNm}.
\end{proof}

We remark that the sizes of the sets $S_m$ and $S_m^\prime$ in Theorem~\ref{thm:NlNm2} can be easily computed by computer. 
We give an estimate for $|S_m|$ below (the method also works for estimating $|S_m^{\prime}|$), although it is not always good according to numerical data. 
Let 
\begin{equation*}
e(m) = \min\{\lfloor m - \log_q(s+1) \rfloor, \ \lfloor \log_q(1 + \left(\lceil q\lceil N_m/2 \rceil/(q-1) \rceil -2 \right) r)  \rfloor\}.
\end{equation*}
For any $i$ with $(m+1)/2 \le i \le e(m)-1$ and for any $j,t$ described in $S_m$, 
by \eqref{eq:qi+1} and noticing the choice of $e(m)$, we have 
$$
q^i + qt+j < q^{i+1} \le q^{e(m)} \le 1 + \left(\lceil q\lceil N_m/2 \rceil/(q-1) \rceil -2 \right) r. 
$$
Therefore, using the second part of Lemma~\ref{lem:NotLeader}, we obtain 
\begin{equation*}  
|S_m| \ge  \sum_{i=(m+1)/2}^{e(m)-1} \sum_{j=1}^{q-1} \left\lfloor \frac{q^{i}-j(q^{m-i}-1)-1}{r(q^{m-i+1}-q)} \right\rfloor, 
\end{equation*}
where $r$ appears in the denomenator due to the condition $q^i+qt+j \equiv 1 \pmod{r}$.

The data in Table~\ref{tab:lower} suggest that the improvement of the lower bound in 
Theorem~\ref{thm:NlNm2} is indeed meaningful (compared to Theorem~\ref{thm:NlNm}).

\begin{table}[h]
 \caption{The lower bound in Theorem \ref{thm:NlNm2}}
 \label{tab:lower}
\vspace{3mm}
\centering
\begin{tabular}{ccccccc}
    \hline 
    $q$&$m$&$s$&$r$ & $\left\lfloor \frac{q N_m}{2(q-1)} \right\rfloor$ & $\left\lceil \frac{q \lceil N_m/2 \rceil}{q-1} \right\rceil + \Lambda_m$ &  Bose distance \\[2mm]
     \hline
  3&5&1&1&36&38&41\\
  3&11&1&2&6039&6179&7266\\
  4&7&1&1&1560&1631&1835\\
  4&7&1&3&520&543&613\\
  5&7&2&1&3487&3622&4157\\
  5&7&2&2&1743&1811&2080\\
    \hline
\end{tabular}
\end{table}

\section{Constacyclic codes with $n=\frac{q^p -1}{rs}$}  \label{sec:qp}

In this section, we assume $nr = \frac{q^p - 1}{s}$ for some prime $p$ and some positive integer $s$. 
Recall that $r \mid q-1$. 

\begin{theorem}  \label{thm:qp}
  Assume that $nr=(q^p-1)/s$ for some prime $p$, and $\gcd(r,nr/\gcd(nr,q-1))=1$. 
 Then, an infinite family of $q$-ary $[n,n- \lceil N_1/2 \rceil - p \lceil N_p/2 \rceil,\ge \left\lceil \frac{q \lceil N_p/2 \rceil}{q-1} \right\rceil + \Lambda_p]$ constacyclic codes is constructed, where 
  $$
  N_1=\gcd(n,(q-1)/r), \qquad N_p=(n-\gcd(n,(q-1)/r))/p, 
  $$ 
and $\Lambda_p$ has been defined in Theorem~\ref{thm:NlNm2}.
 Moreover, if
  \begin{itemize}
    \item [{\rm (1)}] $N_1 > 1$, or
    \item [{\rm (2)}] $N_1=1$ and $2p(q-1) \geq qr$, or 
    \item [{\rm (3)}] $r \leq 3$, 
  \end{itemize} 
  then an infinite family of $q$-ary $[n,n- \lfloor N_1/2 \rfloor - p \lfloor N_p/2 \rfloor,\ge \lfloor \frac{qN_p}{2(q-1)} \rfloor]$ constacyclic  codes is constructed.
\end{theorem}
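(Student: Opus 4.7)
The plan is to reduce the statement to a direct application of Theorem~\ref{thm:NlNm} and Theorem~\ref{thm:NlNm2} with $l = 1$ and $m = p$. The only real work is to verify that the hypotheses of those two general constructions are met, and to convert the general formulas for $N_l$, $N_m$ into the closed-form expressions claimed here.

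First I would show that $\{l_i : i \in Z_{n,r}\} \subseteq \{1, p\}$. Since $nr = (q^p - 1)/s$, we have $q^p \equiv 1 \pmod{nr}$, so $\ord_{nr}(q) \mid p$, and because $p$ is prime, $\ord_{nr}(q) \in \{1, p\}$. Lemma~\ref{lem:liord} then gives $l_i \mid \ord_{nr}(q)$, so $l_i \in \{1, p\}$ for every $i \in Z_{n,r}$. Next I would apply Lemma~\ref{lem:Nl2consta} with $l = 1$, obtaining
$$
N_1 \;=\; \sigma(1)\, \gcd(n, (q-1)/r),
$$
where by definition $\sigma(1) = 1$ precisely when $\gcd(r, nr/\gcd(nr, q-1)) = 1$. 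This is exactly the standing hypothesis of the theorem, hence $N_1 = \gcd(n, (q-1)/r) \ge 1$. Because every element of $Z_{n,r}$ lies in a coset of size $1$ or $p$, the identity $N_1 + pN_p = n$ forces $N_p = (n - \gcd(n, (q-1)/r))/p$, which matches the stated formula.

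It then remains to invoke the two general theorems. If $N_p \ge 1$, then $\{l_i\} = \{1, p\}$ with $l = 1 < m = p$, so Theorem~\ref{thm:NlNm} yields the first family with dimension $n - \lceil N_1/2 \rceil - p \lceil N_p/2 \rceil$, and Theorem~\ref{thm:NlNm2} (applicable since $m = p \ge 2$) sharpens the BCH bound to $\lceil q \lceil N_p/2 \rceil/(q-1) \rceil + \Lambda_p$. The second family, together with its three alternative hypotheses (1)--(3), is just the transcription of the second half of Theorem~\ref{thm:NlNm} under the substitutions $l \to 1$, $m \to p$. The main potential obstacle would be the boundary case $N_p = 0$ (equivalent to $nr \mid q-1$, so $\{l_i\} = \{1\}$), in which Theorem~\ref{thm:NlNm} does not literally apply; however the stated parameters remain formally consistent and the claim becomes vacuous, so this case can be dismissed without separate argument. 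Beyond that check, the proof is purely bookkeeping, and no new estimates on coset leaders are required.
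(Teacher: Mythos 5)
Your proposal is correct and follows essentially the same route as the paper: use Lemma~\ref{lem:liord} to force $l_i \in \{1,p\}$, compute $N_1$ and $N_p$ (the paper does this via Lemma~\ref{lem:sumlNl}, you via Lemma~\ref{lem:Nl2consta} plus the counting identity $N_1 + pN_p = n$ — equivalent bookkeeping), and then invoke Theorems~\ref{thm:NlNm} and~\ref{thm:NlNm2}. Your explicit remark on the degenerate case $N_p = 0$ is a minor point the paper leaves implicit, but it does not change the argument.
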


\begin{proof}
Since $nr = (q^p -1)/s$ and $p$ is a prime, 
  by Lemma~\ref{lem:liord} we have that 
  for any $i\in Z_{n,r}$, $l_i=1$ or $p$, where $l_i = |C_i|$.  
  Combining this with Lemma~\ref{lem:sumlNl} (noticing $\gcd(r,nr/\gcd(nr,q-1))=1$),
  we have
  $$
N_1= \gcd(n,(q-1)/r),   \quad
N_p = (n-\gcd(n,(q-1)/r))/p. 
$$
  Then,  the desired results follow directly from Theorems~\ref{thm:NlNm} and \ref{thm:NlNm2}. 
\end{proof}

We remark that in Theorem~\ref{thm:qp}, since $N_1 >0$, 
by Remark~\ref{rem:cyclic} we know that the constacyclic codes 
in Theorem~\ref{thm:qp} are monomially equivalent to some cyclic codes over $\F_q$.

We notice that some codes in Theorem~\ref{thm:qp} are better than BCH constacyclic codes; see
Example~\ref{ex:qp}. 

\begin{example}   \label{ex:qp}
Let $q=7$, $p=3$, $s=9$, $r=1$, and $n=38$. 
  Then, the $7$-cyclotomic cosets in $Z_{38, 1}$ are: 
  $C_{1}=\{1, 7, 11\}$, 
  $C_{2}=\{2, 14, 22\}$, 
  $C_{3}=\{3, 21, 33\}$, 
  $C_{4}=\{4, 6, 28\}$, 
  $C_{5}=\{5, 17, 35\}$, 
  $C_{8}=\{8, 12, 18\}$,
  $C_{9}=\{9, 23, 25\}$, 
  $C_{10}=\{10, 32, 34\}$, 
$C_{13} =\{13, 15, 29\}$, 
$C_{16} =\{16, 24, 36\}$,
$C_{19}=\{19\}$,
$C_{20}=\{20, 26, 30\}$, 
$C_{27} =\{27, 31, 37\}$, 
$C_{38}=\{38\}$.
So, $\{l_i,i\in Z_{38,1}\}=\{1,3\}$, $N_1=2$ and $N_3 = 12$. 
From Theorem~\ref{thm:qp}, we take the defining set as the set 
$C_1 \cup C_2 \cup C_3 \cup C_4 \cup C_5 \cup C_8 \cup C_{19}$ and obtain a $7$-ary $[38, 19, 13]$ cyclic code, which is not a BCH code and has the best-known parameters \cite{Grassl}. 
In this case, there are several 7-ary BCH codes of length 38 and dimension 19, 
but their minimum distances are at most 12. 
\end{example}

 In Table~\ref{tab:qp}, we list some good codes constructed in Theorem~\ref{thm:qp}. 
In the table, those codes marked with ``*" in the parameters are constructed by using the ceiling function 
(that is, from the first family of constacyclic codes in the theorem), 
and the others  are constructed by using the floor function 
(that is, from the second family of constacyclic codes in the theorem). 
In addition, the column ``Lower bound" records the best lower bound in this paper for the minimum distance of the relevant code (here, actually from Theorems~\ref{thm:NlNm}, \ref{thm:NlNm2} and \ref{thm:qp}). 
We follow these rules in all the other tables of constacyclic codes in this paper.

\begin{longtable}{cccccccc}   
  \caption{Codes in Theorem \ref{thm:qp}}
  \label{tab:qp}\\
    \hline
    $q$&$p$&$s$&$r$ & Lower bound & Bose distance &Parameter&Optimality\\
    \hline
    \endfirsthead
    \hline
      $q$&$p$&$s$&$r$ & Lower bound & Bose distance &Parameter&Optimality\\
    \hline
    \endhead
    \hline
  \endfoot
    2&11&89&1&2&6&$[23,11,8]^*$&\text{optimal}\\
    2&11&89&1&2&5&$[23,12,7]$&\text{optimal}\\
    3&5&11&1&3&4&$[22,11,7]^*$&\text{almost-optimal}\\
    3&11&3851&2&1&6&$[23,11,9]^*$&\text{optimal}\\
    3&11&3851&1&1&5&$[23,12,8]$&\text{optimal}\\
  7&3&18&1&4&6&$[19,9,8]^*$&\text{almost-optimal}\\  
  7&3&6&3&3&6&$[19,10,7]$&\text{almost-optimal}\\  
  7&3&9&1&7&9&$[38,19,13]$&best-known\\ 
  7&7&9466&3&3&4&$[29,14,12]^*$&best-known\\ 
  7&7&9466&3&2&4&$[29,15,11]$&best-known\\
\end{longtable}

\section{Constacyclic codes with $n=\Phi_{p^{b}}(q)$}  \label{sec:pbq}

Recall that $q$ is a prime power, and $r \mid q-1$.
Let $p$ be a prime and $b$ be a positive integer.
In this section, we construct some infinite families of constacyclic codes with length 
$$
n=\Phi_{p^{b}}(q)=\frac{q^{p^{b}}-1}{q^{p^{b-1}}-1}.
$$
Later on, we will see that  in this case the set  $\{l_i: \, i\in Z_{n,r}\}$, 
where $l_i = |C_i|$, always has exactly two elements. 
So, we can use Theorem~\ref{thm:NlNm} to construct constacyclic codes having good lower bounds for their minimum distances.

First, we consider the case when $p \nmid q-1$. 

\begin{theorem}\label{thm:pb1}
  Assume $p \nmid q-1$ and $n=\Phi_{p^{b}}(q)$.
  Then,  an infinite family of $q$-ary  $[n,n-1-p^{b}\lceil(n-1)/(2p^{b}) \rceil,\ge \left\lceil \frac{q \lceil (n-1)/(2p^b) \rceil}{q-1} \right\rceil + \Lambda_{p^b}]$ constacyclic codes is constructed, 
where $\Lambda_{p^b}$ has been defined in Theorem~\ref{thm:NlNm2}.  
  Moreover, if either $r \le 3$ or $2p^{b}(q-1)\ge qr$,
  an infinite family of $q$-ary $[n,n-p^{b} \lfloor(n-1)/(2p^{b}) \rfloor,\ge \lfloor \frac{q(n-1)}{2p^{b}(q-1)} \rfloor]$ constacyclic  codes is constructed.
\end{theorem}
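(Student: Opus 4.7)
The plan is to reduce directly to Theorems~\ref{thm:NlNm} and~\ref{thm:NlNm2} with $l=1$ and $m=p^b$. First I would write $nr=(q^{p^b}-1)/s$ with $s=(q^{p^{b-1}}-1)/r$, which is a positive integer because $r\mid q-1\mid q^{p^{b-1}}-1$. By Lemma~\ref{lem:liord}, every $l_i$ with $i\in Z_{n,r}$ divides $p^b$ and is therefore a power of $p$.

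The crucial step is to verify $\{l_i:i\in Z_{n,r}\}=\{1,p^b\}$ with $N_1=1$ and $N_{p^b}=(n-1)/p^b$. I would first establish the key identity $\gcd(n,q^{p^{b-1}}-1)=1$: if a prime $\ell$ divided both $\Phi_{p^b}(q)$ and $q^{p^{b-1}}-1$, then $\ord_\ell(q)$ would be a proper divisor of $p^b$, while the standard characterization of primes dividing $\Phi_{p^b}(q)$ would force $\ell=p$; but then $\ord_p(q)$ would divide both $p-1$ (by Fermat) and some power of $p$, hence equal $1$, i.e.\ $p\mid q-1$, contradicting our hypothesis. Combining this identity with $r\mid q-1\mid q^{p^{b-1}}-1$ gives $\gcd(nr,q^{p^{b-1}}-1)=r$. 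Lemma~\ref{lem:limidliff2} then yields $l_i\mid p^{b-1}$ if and only if $n\mid i$. Since Lemma~\ref{lem:gcdPhiq-1} gives $\gcd(n,q-1)=1$ and hence $\gcd(n,r)=1$, exactly one element of $Z_{n,r}$ is divisible by $n$, and the same Lemma~\ref{lem:limidliff2} with $l=1$ confirms that its $l_i$ equals $1$. The remaining $n-1$ elements all have $l_i=p^b$, so $N_1=1$ and $N_{p^b}=(n-1)/p^b$.

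Once this is in place, Theorem~\ref{thm:NlNm} applied with $l=1$ and $m=p^b$ delivers both families: the ceiling construction yields the unconditional first family with dimension $n-1-p^b\lceil(n-1)/(2p^b)\rceil$, and the floor construction yields the second family with dimension $n-p^b\lfloor(n-1)/(2p^b)\rfloor$ under its condition~(2) (namely $N_1=1$ together with $2p^b(q-1)\ge qr$) or condition~(3) ($r\le 3$), exactly matching the hypotheses in the theorem. Theorem~\ref{thm:NlNm2}, which applies because $\{l_i\}=\{1,p^b\}$ and $m=p^b>1$, then upgrades the lower bound for the first family to $\lceil q\lceil(n-1)/(2p^b)\rceil/(q-1)\rceil+\Lambda_{p^b}$.

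The main obstacle is the cyclotomic step ruling out intermediate coset sizes $p^j$ with $0<j<b$; once $\gcd(n,q^{p^{b-1}}-1)=1$ is established, everything else is a bookkeeping application of Theorems~\ref{thm:NlNm} and~\ref{thm:NlNm2}.
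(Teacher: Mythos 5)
Your proposal is correct and follows the same overall skeleton as the paper's proof (establish $\{l_i : i \in Z_{n,r}\} = \{1, p^b\}$ with $N_1 = 1$ and $N_{p^b} = (n-1)/p^b$, then invoke Theorems~\ref{thm:NlNm} and~\ref{thm:NlNm2} with $l=1$, $m=p^b$), but the middle step is handled by a genuinely different route. The paper computes the congruence $n = q^{p^{b-1}(p-1)} + \cdots + q^{p^{b-1}} + 1 \equiv p \pmod{q^{p^j}-1}$ for each $j \le b-1$, deduces $\gcd(n, q^{p^j}-1) = \gcd(p, q^{p^j}-1) = 1$ by an elementary order argument, and then feeds these gcd's into the M\"obius-type formula of Lemma~\ref{lem:Nl2consta} to read off $N_1$ and $N_{p^b}$, ruling out intermediate sizes by the counting identity $N_1 + p^b N_{p^b} = n$. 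You instead prove the single (and equivalent, since $q^{p^j}-1 \mid q^{p^{b-1}}-1$) identity $\gcd(n, q^{p^{b-1}}-1)=1$ via the classical characterization of prime divisors of $\Phi_{p^b}(q)$, and then use Lemma~\ref{lem:limidliff2} directly to locate the unique element of $Z_{n,r}$ with $l_i \mid p^{b-1}$ and to confirm its coset size is $1$. Your route avoids the M\"obius computation and makes the structural reason for $N_1=1$ more transparent (there is exactly one multiple of $n$ in $Z_{n,r}$ because $\gcd(n,r)=1$), at the cost of importing the standard but externally-sourced fact that a prime divisor $\ell$ of $\Phi_{p^b}(q)$ satisfies $\ord_\ell(q)=p^b$ or $\ell = p$; the paper's congruence $n \equiv p \pmod{q^{p^{b-1}}-1}$ gives a self-contained two-line substitute. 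The final bookkeeping (dimensions, the conditions $r \le 3$ or $2p^b(q-1) \ge qr$ matching conditions (2) and (3) of Theorem~\ref{thm:NlNm}, and the upgrade of the first family's bound via Theorem~\ref{thm:NlNm2}) is carried out correctly.
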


\begin{proof}
  First, we want to determine the set $\{l_i: \, i\in Z_{n,r}\}$, where $l_i = |C_i|$. 
Since $n=\Phi_{p^b}(q)$ and $r \mid q-1$, 
by Lemma~\ref{lem:liord} we have
$l_i \mid p^{b}$ for any $i \in Z_{n,r}$.

For $j=0,1,\ldots,b-1$,  we have
\begin{equation}  \label{eq:nmodpj}
\begin{split}
    n=\frac{q^{p^{b}}-1}{q^{p^{b-1}}-1}&=q^{p^{b-1}(p-1)}+q^{p^{b-1}(p-2)}+\cdots+q^{p^{b-1}}+1\\
    &\equiv 1+1+\cdots+1+1 \equiv p \pmod{q^{p^j}-1}, 
\end{split}
\end{equation}
which implies 
$$
\gcd(n,q^{p^j}-1)=\gcd(p,q^{p^j}-1).
$$
If $p \mid q$, clearly we have $\gcd(p,q^{p^j}-1)=1$. 
In addition, if $p \nmid q$, suppose $\gcd(p,q^{p^j}-1)=p$, then we have $p \mid q-1$ 
(because $p \mid q^{p^j}-1$ and $p \mid q^{p-1}-1$), which contradicts with the assumption $p \nmid q-1$, 
and thus, we must have  $\gcd(p,q^{p^j}-1)=1$. 
Therefore,  we always have
\begin{equation}  \label{eq:gcdnqpjp}
  \gcd(n,q^{p^j}-1)=\gcd(p,q^{p^j}-1)=1 \quad \textrm{for} \quad j=0,1,\ldots,b-1, 
\end{equation}
which also implies (noticing $r \mid q-1$)
\begin{equation}  \label{eq:gcdrnqpj}
 \gcd\left(r, \frac{nr}{\gcd(nr,q^{p^j}-1)}\right)= \gcd\left(r, \frac{n}{\gcd(n,(q^{p^j}-1)/r)}\right)=1.
\end{equation}

  Thus, applying Lemma~\ref{lem:Nl2consta} with \eqref{eq:gcdnqpjp} and \eqref{eq:gcdrnqpj} and noticing the definition of the M\"{o}bius function,
  we obtain 
  \begin{align*}
    N_1&=\gcd(n,(q-1)/r) = 1,\\
    N_{p^{b}}&=\Big(\sum_{j=0}^{b} \mu(p^{b-j}) \gcd(n,(q^{p^j}-1)/r) \Big)/p^b  \\
     &=\frac{\gcd(n,(q^{p^{b}}-1)/r)-\gcd(n,(q^{p^{b-1}}-1)/r)}{p^{b}} 
=\frac{n- 1}{p^{b}}.
  \end{align*}
Moreover, noticing $N_1 + p^b N_{p^b}=n=|Z_{n,r}|$, we must have 
$
  \{l_i: \, i\in Z_{n,r}\} = \{1, p^b\}.
$
Then, the desired result follows directly from  Theorems~\ref{thm:NlNm} and \ref{thm:NlNm2}. 
\end{proof}

Now, we consider the case when $p \mid \frac{q-1}{r}$. 

\begin{theorem}\label{thm:pb2}
  Assume $p \mid (q-1)/r$ and $n=\Phi_{p^{b}}(q)$.
  Then,  an infinite family of $q$-ary $[n,n-\lceil p/2 \rceil-p^{b}\lceil \frac{n-p}{2p^{b}} \rceil,\ge \left\lceil \frac{q \lceil (n-p)/(2p^b) \rceil}{q-1} \right\rceil + \Lambda_{p^b}]$ constacyclic  codes is constructed, 
where $\Lambda_{p^b}$ has been defined in Theorem~\ref{thm:NlNm2}.  
  Moreover, if either $r \le 3$ or $2p^{b}(q-1) \ge qr$,
  an infinite family of $q$-ary $[n,n-\lfloor p/2 \rfloor -p^{b} \lfloor\frac{n-p}{2p^{b}} \rfloor,\ge \lfloor \frac{q(n-p)}{2p^{b}(q-1)} \rfloor]$ constacyclic codes is constructed.
\end{theorem}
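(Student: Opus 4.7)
The plan is to mimic the proof of Theorem~\ref{thm:pb1}, identify $\{l_i:i\in Z_{n,r}\}$ together with the counts $N_1$ and $N_{p^b}$, and then invoke Theorems~\ref{thm:NlNm} and~\ref{thm:NlNm2}. First I would write $nr=(q^{p^b}-1)/\big((q^{p^{b-1}}-1)/r\big)$, which is a legitimate factorization because $r\mid q-1\mid q^{p^{b-1}}-1$. Lemma~\ref{lem:liord} then forces $l_i\mid p^b$ for every $i\in Z_{n,r}$, so the only possible coset sizes lie in $\{1,p,p^2,\ldots,p^b\}$.

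Next I would reuse the congruence \eqref{eq:nmodpj}, which still yields $n\equiv p\pmod{q^{p^j}-1}$ for $0\le j\le b-1$, hence $\gcd(n,q^{p^j}-1)=\gcd(p,q^{p^j}-1)$. Since the hypothesis $p\mid(q-1)/r$ gives $p\mid q-1\mid q^{p^j}-1$, this common divisor equals exactly $p$. From $p\mid(q-1)/r\mid(q^{p^j}-1)/r$ together with $p\mid n$, combined with the chain $\gcd(n,(q^{p^j}-1)/r)\mid\gcd(n,q^{p^j}-1)=p$, I would deduce $\gcd(n,(q^{p^j}-1)/r)=p$ for $0\le j\le b-1$; and $\gcd(n,(q^{p^b}-1)/r)=n$ because $n\mid q^{p^b}-1$.

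The most delicate step is checking that the factor $\sigma(p^j)$ in Lemma~\ref{lem:Nl2consta} equals $1$ for every $0\le j\le b$. For $j=b$ this is immediate since $\gcd(nr,q^{p^b}-1)=nr$. For $0\le j\le b-1$ the identity $\gcd(nr,q^{p^j}-1)=rp$ reduces the task to verifying $\gcd(r,n/p)=1$. Here Lemma~\ref{lem:gcdPhiq-1} gives $\gcd(n,q-1)=p$, so $\gcd(n,r)\mid p$, and consequently at every prime $\ell\neq p$ one trivially has $v_\ell(\gcd(r,n/p))=v_\ell(\gcd(r,n))=0$. At the prime $p$ itself a short binomial expansion of $\Phi_{p^b}(q)=\Phi_p(q^{p^{b-1}})$ modulo $p^2$ shows $v_p(n)=1$ for odd $p$; and for $p=2$, the hypothesis $2\mid(q-1)/r$ together with $v_2(r)\ge 1$ forces $q\equiv 1\pmod{4}$, from which a lifting-the-exponent style estimate gives $v_2(\Phi_{2^b}(q))=1$ again. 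In both cases $v_p(\gcd(r,n/p))=0$, so $\gcd(r,n/p)=1$ as required.

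With $\sigma(p^j)=1$ for all $j$, Lemma~\ref{lem:Nl2consta} then immediately yields $N_1=p$, $N_{p^i}=0$ for $1\le i\le b-1$, and $N_{p^b}=(n-p)/p^b$. In particular $\{l_i:i\in Z_{n,r}\}=\{1,p^b\}$, and since $N_1=p\ge 2$, condition (1) of Theorem~\ref{thm:NlNm} is automatically satisfied, so both the ceiling-based and the floor-based families are constructed with the advertised dimensions and the minimum-distance bounds supplied by Theorems~\ref{thm:NlNm} and~\ref{thm:NlNm2}. The main obstacle is precisely the $p$-adic valuation argument establishing $\gcd(r,n/p)=1$, especially the corner case $p=2$ where $v_2(n)$ can a priori exceed $1$; fortunately the hypothesis $p\mid(q-1)/r$ propagates through $r$ and $q-1$ exactly enough to keep the quotient $n/p$ coprime to $r$.
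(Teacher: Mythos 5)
Your proposal is correct and follows essentially the same route as the paper: compute $\gcd(n,(q^{p^j}-1)/r)=p$ from the congruence $n\equiv p\pmod{q^{p^j}-1}$, verify $\gcd(r,n/p)=1$ so that $\sigma(p^j)=1$ in Lemma~\ref{lem:Nl2consta}, conclude $N_1=p$, $N_{p^b}=(n-p)/p^b$ and $\{l_i\}=\{1,p^b\}$, and invoke Theorems~\ref{thm:NlNm} and~\ref{thm:NlNm2}. The only cosmetic difference is in showing $v_p(n)=1$: the paper notes that in the relevant case $p\mid r$, the hypothesis $p\mid (q-1)/r$ forces $p^2\mid q-1$, giving $n\equiv p\pmod{p^2}$ uniformly for all $p$, whereas you split into odd $p$ and $p=2$; both computations are valid.
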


\begin{proof}
  Combining \eqref{eq:nmodpj} with the assumption $p \mid (q-1)/r$, 
we have that for $j=0, 1, \ldots, b-1$, 
\begin{equation}  \label{eq:npjrp}
  \gcd(n,(q^{p^j}-1)/r)=\gcd(p,(q^{p^j}-1)/r)=p. 
\end{equation}

In addition, by \eqref{eq:nmodpj} and noticing $r \mid q-1$, we have 
$
\gcd(n,r) = \gcd(p,r). 
$
So, $\gcd(n,r)=1$ or $p$. 
If $\gcd(n,r)=1$, clearly we have $\gcd(n/p,r)=1$.
If $\gcd(n,r)=p$, then noticing $p\mid (q-1)/r$, we can write $q=ap^2+1$ for some positive integer $a$, and so, 
\begin{equation*} 
\begin{split}
    n &=q^{p^{b-1}(p-1)}+q^{p^{b-1}(p-2)}+\cdots+q^{p^{b-1}}+1\\
& = (ap^2+1)^{p^{b-1}(p-1)}+ \cdots + (ap^2+1)^{p^{b-1}} + 1 \\
&\equiv p \pmod{p^2}, 
\end{split}
\end{equation*}
which implies $p \nmid n/p$. 
Thus, when $\gcd(n,r)=p$, we have $p \nmid n/p$, and so $\gcd(n/p,r)=1$. 
Hence, we always have 
\begin{equation}   \label{eq:npr1}
\gcd(n/p,r)=1.
\end{equation} 
Therefore, combining \eqref{eq:npjrp} with \eqref{eq:npr1}, we get that 
for $j=0, 1, \ldots, b-1$, 
\begin{equation}  \label{eq:rnrqpj}
\begin{split}
  \gcd(r,\frac{nr}{\gcd(nr,q^{p^j}-1)})  = \gcd(r,\frac{n}{\gcd(n,(q^{p^j}-1)/r)}) 
 =  \gcd(r,n/p)=1. 
\end{split}
\end{equation}

Now, applying Lemma~\ref{lem:Nl2consta} with \eqref{eq:npjrp} and \eqref{eq:rnrqpj},
  we obtain 
  \begin{align*}
    N_1&=\gcd(n,(q-1)/r) = p,\\
    N_{p^{b}}&=\Big(\sum_{j=0}^{b} \mu(p^{b-j}) \gcd(n,(q^{p^j}-1)/r) \Big)/p^b  \\
     &=\frac{\gcd(n,(q^{p^{b}}-1)/r)-\gcd(n,(q^{p^{b-1}}-1)/r)}{p^{b}} 
=\frac{n- p}{p^{b}}.
  \end{align*}
Moreover, noticing $N_1 + p^b N_{p^b}=n=|Z_{n,r}|$, we must have 
$
  \{l_i: \, i\in Z_{n,r}\} = \{1, p^b\}.
$
Then, the desired result follows directly from Theorems~\ref{thm:NlNm} and \ref{thm:NlNm2}. 
\end{proof}

We remark that in Theorems~\ref{thm:pb1} and \ref{thm:pb2}, since $N_1 >0$, 
by Remark~\ref{rem:cyclic} we know that the constacyclic codes 
there are monomially equivalent to some cyclic codes over $\F_q$.

Finally, we consider the case when $p \mid q-1$ and $p \nmid (q-1)/r$.

\begin{theorem}\label{thm:pb3}
  Assume $p \mid q-1$, $p \nmid (q-1)/r$, and $n=\Phi_{p^{b}}(q)$. 
Assume further either $b \ge 2$  or $p$ is odd. 
  Then, 
  an infinite family of $q$-ary $[n,n-p-p^{b}\lceil \frac{n-p}{2p^{b}} \rceil,\ge \left\lceil \frac{q \lceil (n-p)/(2p^b) \rceil}{q-1} \right\rceil + \Lambda_{p^b}]$ constacyclic codes is constructed, 
where $\Lambda_{p^b}$ has been defined in Theorem~\ref{thm:NlNm2}.  
  Moreover, if either $r \le 3$ or $2p^{b-1}(q-1) \ge qr$,
  an infinite family of $q$-ary $[n,n-p^{b} \lfloor \frac{n-p}{2p^{b}} \rfloor,\ge \lfloor \frac{q(n-p)}{2p^{b}(q-1)} \rfloor]$ constacyclic codes is constructed.
\end{theorem}

\begin{proof}
First, since $p \mid q-1$ and $p \nmid (q-1)/r$, we must have $p\mid r$. 

Combining \eqref{eq:nmodpj} with the assumption $p \nmid (q-1)/r$, 
  we have 
\begin{equation}  \label{eq:nqrp1}
\gcd(n,(q-1)/r) = \gcd(p, (q-1)/r) = 1. 
\end{equation}
Similarly, combining \eqref{eq:nmodpj} with the assumption $p \mid q-1$,  for $j=1, \ldots, b-1$ we have 
\begin{equation}  \label{eq:nqpjrp}
\begin{split}
    \gcd(n,(q^{p^j}-1)/r) & = \gcd(p,(q^{p^j}-1)/r) \\
& = \gcd(p,\frac{q-1}{r}(q^{p^j-1}+ \cdots +q+1)) \\
& = \gcd(p,q^{p^j-1}+ \cdots +q+1) = p.
\end{split}
\end{equation}

In addition, since $p\mid q-1$, we write $q=ap+1$. 
Then,  
\begin{equation*} 
\begin{split}
    n &=q^{p^{b-1}(p-1)}+q^{p^{b-1}(p-2)}+\cdots+q^{p^{b-1}}+1\\
& = (ap+1)^{p^{b-1}(p-1)}+ \cdots + (ap+1)^{p^{b-1}} + 1 \\
    &\equiv (1+p^{b-1}(p-1)ap)+\cdots+(1+p^{b-1}ap)+1 \\
&\equiv p + ap^b \cdot \frac{p(p-1)}{2} \equiv p \pmod{p^2}, 
\end{split}
\end{equation*}
where the last congruence follows from the assumption that $b \ge 2$ or $p$ is odd. 
So, we have $p \nmid n/p$. 
In addition, using \eqref{eq:nmodpj} and noticing $p \mid r$, we get 
\begin{equation}  \label{eq:gcd-npr}
\gcd(n,r)=\gcd(p,r)=p.
\end{equation}
Hence, we must have 
\begin{equation}  \label{eq:npr=1}
\gcd(n/p,r)=1.
\end{equation}

Then, using \eqref{eq:nqrp1} and \eqref{eq:gcd-npr}, we obtain 
\begin{equation}  \label{eq:rnrq=p}
\begin{split}
    \gcd\left(r,\frac{nr}{\gcd(nr,q-1)}\right)&=\gcd\left(r,\frac{n}{\gcd(n,(q-1)/r)}\right)\\
    &=\gcd(r,n) =p.
\end{split}
\end{equation}
For $j=1, \ldots, b-1$, using \eqref{eq:nqpjrp} and  \eqref{eq:npr=1}, we have 
\begin{equation}   \label{eq:rnrqpj=1}
\begin{split}
    \gcd\left(r,\frac{nr}{\gcd(nr,q^{p^j}-1)}\right)&=\gcd\left(r,\frac{n}{\gcd(n,(q^{p^j}-1)/r)}\right)\\
    &=\gcd(r,n/p) = 1.
\end{split}
\end{equation}

Now, applying Lemma~\ref{lem:Nl2consta} with \eqref{eq:nqpjrp}, \eqref{eq:rnrq=p} and 
\eqref{eq:rnrqpj=1},
  we obtain 
  \begin{align*}
N_1 & = 0, \\
    N_p&=\frac{\gcd(n,(q^p-1)/r)-0}{p}= \frac{p-0}{p}=1,\\
    N_{p^{b}}&=\Big(\sum_{j=0}^{b} \mu(p^{b-j}) \gcd(n,(q^{p^j}-1)/r) \Big)/p^b  \\
     &=\frac{\gcd(n,(q^{p^{b}}-1)/r)-\gcd(n,(q^{p^{b-1}}-1)/r)}{p^{b}} 
=\frac{n- p}{p^{b}}.
  \end{align*}
Moreover, noticing $pN_p + p^b N_{p^b}=n=|Z_{n,r}|$, we must have 
$
  \{l_i: \, i\in Z_{n,r}\} = \{p, p^b\}.
$
Then, the desired result follows directly from  Theorems~\ref{thm:NlNm} and \ref{thm:NlNm2}.
\end{proof}

We remark that in Theorem~\ref{thm:pb3}, from \eqref{eq:nqrp1} and \eqref{eq:gcd-npr} 
we have $\gcd(n, q-1)=p$, which does not divide $(q-1)/r$ by assumption; 
and so, the sufficient condition in Lemma~\ref{lem:equivalence} 
about equivalence between constacyclic codes and cyclic codes fails in this case. 

We notice that some codes in this section are better than BCH constacyclic codes; see
Example~\ref{ex:pb1}.

\begin{example}  \label{ex:pb1}
Let $q=4$, $p=2$, $b=2$, $r=3$, and $n=17$. 
  Then, the $4$-cyclotomic cosets in $Z_{17, 3}$ are: 
  $C_{1}=\{1, 4, 13, 16\}$, 
  $C_{7}=\{7, 10, 28, 40\}$, 
  $C_{19}=\{19, 25, 43, 49\}$, 
  $C_{22}=\{22, 31, 37, 46\}$, 
  $C_{34}=\{34\}$.
So, $\{l_i,i\in Z_{17,3}\}=\{1,4\}$, $N_1=1$ and $N_4 = 4$. 
From the first family of constacyclic codes in Theorem~\ref{thm:pb1}, we take the defining set as the set  $C_1 \cup C_7 \cup C_{34}$ and then obtain a $4$-ary constacyclic $[17, 8, 8]$ code, which is not a BCH code and has the best-known parameters \cite{Grassl}. 
In this case, there is only one BCH constacyclic code of length 17 and dimension 8
(its defining set is $C_7 \cup C_{22} \cup C_{34}$ and minimum distance is 6).  
\end{example}

In Table~\ref{tab:qpb}, we present some codes constructed in Section~\ref{sec:pbq} having good parameters.
In the table, 
$d_{\text{best}}$ stands for the maximal minimum distance of all known $q$-ary linear codes with relevant length and dimension, and its value is from \cite{Grassl}.  

\begin{longtable}{ccccccccc}    
  \caption{Codes in Section~\ref{sec:pbq}}
  \label{tab:qpb}\\
    \hline
    $q$&$p$&$b$&$r$ & Lower bound & Bose distance &Parameter&Optimality & References\\
    \hline
    \endfirsthead
    \hline
    $q$&$p$&$b$&$r$ & Lower bound & Bose distance &Parameter&Optimality & References\\
    \hline
    \endhead
    \hline
  \endfoot
    $2$&$2$&$3$&$1$&$2$&$3$&$[17,8,6]^*$&optimal&Theorem~\ref{thm:pb1}\\
    $2$&$2$&$3$&$1$&$2$&$3$&$[17,9,5]$&optimal&Theorem~\ref{thm:pb1}\\
    $2$&$3$&$2$&$1$&$8$&$11$&$[73,37,13]$&$d_\text{best}=14$&Theorem~\ref{thm:pb1}\\
    $4$&$2$&$2$&$3$&$3$&$7$&$[17,8,8]^*$&best-known&Theorem~\ref{thm:pb1}\\
    $4$&$2$&$2$&$3$&$2$&$7$&$[17,9,7]$&best-known&Theorem~\ref{thm:pb1}\\
    $5$&$2$&$2$&$2$&$4$&$5$&$[26,13,8]^*$&$d_\text{best}=10$&Theorem~\ref{thm:pb2}\\
    $5$&$2$&$2$&$4$&$4$&$9$&$[26,12,10]^*$&$d_\text{best}=11$&Theorem~\ref{thm:pb3}\\
    $5$&$2$&$2$&$4$&$3$&$4$&$[26,14,8]$&$d_\text{best}=9$&Theorem~\ref{thm:pb3}\\
    $7$&$2$&$2$&$2$&$7$&$9$&$[50,26,14]$&$d_\text{best}=15$&Theorem~\ref{thm:pb3}\\
    $8$&$2$&$2$&$1$&$10$&$22$&$[65,32,22]^*$&best-known&Theorem~\ref{thm:pb1}\\
     \end{longtable}

\section{Constacyclic codes with $n=\Phi_{p_1 p_2}(q)$}   \label{sec:p1p2}

Recall that $q$ is a prime power, and $r\mid q-1$.
We assume that $p_1,p_2$ are two distinct primes with $p_1 < p_2$.
In the following, we construct constacyclic codes with length
$$
n=\Phi_{p_1 p_2}(q)= \frac{(q^{p_1 p_2} -1)(q-1)}{(q^{p_1}-1)(q^{p_2}-1)} = \frac{q^{p_1 p_2} -1}{\lcm(q^{p_1}-1,q^{p_2}-1)}.
$$

Moreover, we only consider the case when the set  $\{l_i: \, i\in Z_{n,r}\}$, 
where $l_i = |C_i|$, has exactly two elements, and then 
 we use Theorem~\ref{thm:NlNm} to construct constacyclic codes having good lower bounds for their minimum distances.

\begin{theorem}\label{thm:p1p2}
  Let $p_1,p_2$ be two distinct primes with $p_1<p_2$, and 
  $n=\Phi_{p_1 p_2}(q)$. Assume 
  $p_2 \nmid (q^{p_1}-1)/r$.  
  Then, an infinite family of $q$-ary $[n,n-1-p_1 p_2 \lceil(n-1)/(2p_1 p_2) \rceil,\ge \left\lceil \frac{q \lceil (n-1)/(2p_1p_2) \rceil}{q-1} \right\rceil + \Lambda_{p_1p_2}]$ constacyclic codes is constructed, 
where $\Lambda_{p_1p_2}$ has been defined in Theorem~\ref{thm:NlNm2}. 
  Moreover, if either $r \le 3$ or $2 p_1 p_2 (q-1) \ge qr$,
  an infinite family of $q$-ary $[n,n- p_1 p_2 \lfloor(n-1)/(2p_1 p_2) \rfloor,\ge\lfloor \frac{q(n-1)}{2p_1 p_2(q-1)} \rfloor]$ constacyclic codes is constructed. 
\end{theorem}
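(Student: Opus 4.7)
The plan is to mirror the proofs of Theorems~\ref{thm:pb1}, \ref{thm:pb2}, and \ref{thm:pb3}: show that $\{l_i : i \in Z_{n,r}\} = \{1, p_1 p_2\}$, compute $N_1$ and $N_{p_1 p_2}$, and then invoke Theorems~\ref{thm:NlNm} and~\ref{thm:NlNm2} directly. Writing $nr = (q^{p_1 p_2}-1)/(\lcm(q^{p_1}-1,q^{p_2}-1)/r)$, Lemma~\ref{lem:liord} gives $l_i \mid p_1 p_2$ for every $i \in Z_{n,r}$, so $l_i \in \{1, p_1, p_2, p_1 p_2\}$. Since $p_1 p_2$ has two distinct prime factors, Lemma~\ref{lem:gcdPhiq-1} gives $\gcd(n, q-1) = 1$; in particular $\gcd(n, r) = 1$, and a short valuation argument then shows $\gcd(n, (q^d-1)/r) = \gcd(n, q^d - 1)$ for each $d \mid p_1 p_2$, together with $\sigma(d) = 1$ in the notation of Lemma~\ref{lem:Nl2consta}.

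The heart of the argument is showing $\gcd(n, q^{p_i}-1) = 1$ for $i = 1, 2$. Factoring $q^{p_i} - 1 = (q-1) \Phi_{p_i}(q)$ and using $\gcd(n, q-1) = 1$ reduces each to $\gcd(\Phi_{p_1 p_2}(q), \Phi_{p_i}(q))$. From the identity $\Phi_{p_1 p_2}(q) \Phi_{p_{3-i}}(q) = (q^{p_1 p_2}-1)/(q^{p_i}-1) \equiv p_{3-i} \pmod{\Phi_{p_i}(q)}$, this gcd divides $p_{3-i}$ and hence equals $1$ or $p_{3-i}$. For $i = 2$, a short case split on whether $p_1 \mid q-1$ shows $p_1 \nmid \Phi_{p_2}(q)$ (in the case $p_1 \nmid q-1$ one would otherwise need $\ord_{p_1}(q) = p_2$, forcing $p_2 \mid p_1 - 1$, which contradicts $p_1 < p_2$), so the gcd is $1$ unconditionally. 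For $i = 1$, the key observation is that $(q^{p_1}-1)/r = ((q-1)/r) \Phi_{p_1}(q)$, so $p_2 \mid \Phi_{p_1}(q)$ would force $p_2 \mid (q^{p_1}-1)/r$, contradicting the hypothesis; hence the gcd equals $1$ here as well.

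With $\gcd(n, q^{p_1}-1) = \gcd(n, q^{p_2}-1) = 1$ and $\gcd(n, q^{p_1 p_2}-1) = n$, M\"obius inversion via Lemma~\ref{lem:sumlNl} immediately yields $N_1 = 1$, $N_{p_1} = N_{p_2} = 0$, and $N_{p_1 p_2} = (n-1)/(p_1 p_2)$. In particular $\{l_i : i \in Z_{n,r}\} = \{1, p_1 p_2\}$, and the two announced families come straight out of Theorems~\ref{thm:NlNm} and~\ref{thm:NlNm2}: the ceiling-based first family inherits the sharper bound involving $\Lambda_{p_1 p_2}$ (applicable since $l = 1$), and the floor-based second family is produced under condition~(2) or~(3) of Theorem~\ref{thm:NlNm}, which matches exactly the stated alternative $r \le 3$ or $2 p_1 p_2 (q-1) \ge qr$. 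The main obstacle is the $\gcd(\Phi_{p_1 p_2}(q), \Phi_{p_1}(q))$ computation in the second paragraph: recognizing that the single hypothesis $p_2 \nmid (q^{p_1}-1)/r$ is precisely the divisibility condition needed to kill the bad case in which this gcd equals $p_2$.
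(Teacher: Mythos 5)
Your proposal is correct and follows essentially the same route as the paper's proof: establish $\gcd(n,q-1)=1$ via Lemma~\ref{lem:gcdPhiq-1}, use the congruence $(q^{p_1p_2}-1)/(q^{p_i}-1)\equiv p_{3-i}$ together with the hypothesis $p_2\nmid (q^{p_1}-1)/r$ to get $N_1=1$, $N_{p_1}=N_{p_2}=0$ and $N_{p_1p_2}=(n-1)/(p_1p_2)$, and then invoke Theorems~\ref{thm:NlNm} and~\ref{thm:NlNm2}. The only cosmetic difference is that you run the gcd computations through $\gcd(\Phi_{p_1p_2}(q),\Phi_{p_i}(q))$ and an order argument modulo $p_1$, whereas the paper works directly with $\gcd(n,(q^{p_i}-1)/r)$ and disposes of $N_{p_2}$ via the integrality bound $N_{p_2}\le (p_1-1)/p_2<1$; both are valid.
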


\begin{proof}
  First, we want to determine the set $\{l_i: \, i\in Z_{n,r}\}$, where $l_i = |C_i|$. 
Since $n=\Phi_{p_1 p_2}(q)$ and $r \mid q-1$,  
by Lemma~\ref{lem:liord} 
we have
  $l_i \mid p_1p_2$ for any $i \in Z_{n,r}$.

  By Lemma~\ref{lem:gcdPhiq-1},
  we have 
\begin{equation}  \label{eq:gcdnq1}
\gcd(n,q-1)= \gcd(\Phi_{p_1 p_2}(q),q-1) = 1.
\end{equation}
  This, together with $r \mid q-1$, implies that for any integer $j \ge 1$,
  we have 
\begin{equation}  \label{eq:rnrqj=1}
  \gcd\left(r, \frac{nr}{\gcd(nr,q^j-1)}\right)= \gcd\left(r, \frac{n}{\gcd(n,(q^j-1)/r)}\right)=1.
\end{equation}

Notice that
  \begin{align*}
    \frac{q^{p_1 p_2}-1}{q^{p_1}-1}
    &=q^{p_1(p_2-1)}+q^{p_1(p_2-2)}+\cdots+q^{p_1}+1\\
    &\equiv 1+1+\cdots+1+1 \equiv p_2 \pmod{q^{p_1}-1}. 
  \end{align*}
  Then, we get 
  $$
  \gcd\left(\frac{q^{p_1 p_2}-1}{q^{p_1}-1},\frac{q^{p_1}-1}{r}\right)=\gcd\left(p_2,\frac{q^{p_1}-1}{r}\right)=1, 
  $$
where the last equality follows from the assumption $p_2 \nmid (q^{p_1}-1)/r$. 
So, noticing  $\frac{q^{p_1 p_2}-1}{q^{p_1}-1} = n \cdot \frac{q^{p_2}-1}{q-1}$, we obtain 
\begin{equation}  \label{eq:nqp1r}
\gcd(n,(q^{p_1}-1)/r) = 1.
\end{equation}

Now,  applying Lemma~\ref{lem:Nl2consta} with \eqref{eq:gcdnq1}, \eqref{eq:rnrqj=1} and \eqref{eq:nqp1r},
  we obtain
  \begin{align*}
    N_1&=\gcd(n,(q-1)/r) = 1,\\
      N_{p_1}&=\left(\gcd\left(n,\frac{q^{p_1}-1}{r}\right)-N_1\right) / p_1=(1-1)/p_1=0.
  \end{align*}
  
  Next, we want to compute $N_{p_2}$. 
Similarly as deducing \eqref{eq:nqp1r}, we have 
\begin{equation*}
\gcd\left(n,\frac{q^{p_2}-1}{r}\right) 
\le \gcd\left(\frac{q^{p_1 p_2}-1}{q^{p_1}-1},\frac{q^{p_2}-1}{r}\right)
=  \gcd\left(p_1,\frac{q^{p_2}-1}{r}\right) \le p_1.
\end{equation*}
  Combining this with $N_1=1$ and the assumption $p_1 < p_2$,
  we get 
  $$
   N_{p_2}=\left(\gcd\left(n,\frac{q^{p_2}-1}{r}\right)-N_1 \right)/p_2\le (p_1-1)/p_2 < 1.
  $$
  Thus, we have $N_{p_2}=0$.

  In addition,   we directly have
  $$
  N_{p_1 p_2}=\frac{n-N_1-p_1 N_{p_1}-p_2 N_{p_2}}{p_1 p_2}=\frac{n-1}{p_1 p_2}.
  $$
Hence, we have 
$
  \{l_i: \, i\in Z_{n,r}\} = \{1, p_1p_2\}.
$
Then, the desired results follow directly from Theorems~\ref{thm:NlNm} and \ref{thm:NlNm2}.
\end{proof}

We remark that in Theorem~\ref{thm:p1p2}, since $N_1 >0$, 
by Remark~\ref{rem:cyclic} we know that the constacyclic codes 
there are monomially equivalent to some cyclic codes over $\F_q$.

We also remark that the condition $p_2 \nmid (q^{p_1}-1)/r$ in Theorem~\ref{thm:p1p2} holds when $p_1 \nmid p_2-1$. 
Indeed, if $p_1 \nmid p_2-1$, then $q^{p_1} \not\equiv 1 \pmod{p_2}$, and so $p_2 \nmid q^{p_1}-1$.  

We notice that some codes in Theorem~\ref{thm:p1p2} are better than BCH constacyclic codes; see
Example~\ref{ex:p1p2}. 

\begin{example}  \label{ex:p1p2}
Let $q=7$, $p_1=2$, $p_2=3$, $r=2$, and $n=43$. 
  Then, the $7$-cyclotomic cosets in $Z_{43, 2}$ are: 
$C_{1}=\{1, 7, 37,49, 79, 85\}$, 
  $C_{3}=\{3, 21, 25, 61, 65, 83\}$, 
  $C_{5}=\{5, 13, 35, 51, 73, 81\}$, 
  $C_{9}=\{9, 11, 23, 63, 75, 77\}$, 
  $C_{15}=\{15, 19, 39, 47, 67, 71\}$,
  $C_{17}=\{17, 27, 33, 53, 59, 69\}$,
  $C_{29}=\{29, 31, 41, 45, 55, 57\}$,
  $C_{43}=\{43\}$.
So, $\{l_i,i\in Z_{43,2}\}=\{1,6\}$, $N_1=1$ and $N_6 = 7$. 
From the first family of constacyclic codes in Theorem~\ref{thm:p1p2}, we take the defining set as the set $C_1 \cup C_3 \cup C_5 \cup C_9 \cup C_{43}$ and then obtain a $7$-ary constacyclic $[43, 18, 17]$ code, which is not a BCH constacyclic code ($d_{\text{best}}=18$ according to \cite{Grassl}). 
In this case, there is only one BCH constacyclic code of length 43 and dimension 18
(its defining set is $C_1 \cup C_5 \cup C_{15} \cup C_{29} \cup C_{43}$ and minimum distance is 16).  
\end{example}

In Table~\ref{tab:p1p2}, we present some codes constructed in Theorem~\ref{thm:p1p2} having good parameters.

\begin{longtable}{cccccccc}    
  \caption{Codes in Theorem \ref{thm:p1p2}}
  \label{tab:p1p2}\\
    \hline
    $q$&$p_1$&$p_2$&$r$ & Lower bound & Bose distance &Parameter&Optimality\\
    \hline
    \endfirsthead
    \hline
    $q$&$p_1$&$p_2$&$r$ & Lower bound & Bose distance &Parameter&Optimality\\
    \hline
    \endhead
    \hline
  \endfoot
    $2$&$2$&$7$&$1$&$4$&$7$&$[43,14,14]^*$&optimal\\
    $2$&$2$&$7$&$1$&$3$&$3$&$[43,29,6]$&optimal\\
    $3$&$2$&$5$&$1$&$4$&$5$&$[61,31,14]$&$d_\text{best}=16$\\
    $4$&$2$&$3$&$3$&$2$&$3$&$[13,6,6]^*$&optimal\\
    $4$&$2$&$3$&$3$&$1$&$3$&$[13,7,5]$&optimal\\
   $7$&$2$&$3$&$2$&$5$&$8$&$[43,18,17]^*$&$d_\text{best}=18$\\
    $7$&$2$&$3$&$3$&$4$&$5$&$[43,25,11]$&$d_\text{best}=12$\\
    $7$&$2$&$3$&$6$&$5$&$7$&$[43,18,15]^*$&$d_\text{best}=18$\\
    \end{longtable}

\section{Concluding remarks}
We constructed several infinite families of 
constacyclic codes over $\F_q$ in this paper. 
They contain many constacyclic codes with optimal, or almost-optimal, or best-known parameters. 
We presented some of them in Tables~\ref{tab:qp}, \ref{tab:qpb} and \ref{tab:p1p2}, 
and the codes in these tables are new compared to the constacyclic codes in the references below.
Especially, we considered two forms of the code length: $n=\Phi_{p^b}(q)$ or $\Phi_{p_1p_2}(q)$, 
which are new up to our knowledge. 
We also presented several codes which are better than BCH constacyclic codes (see Examples~\ref{ex:qp}, \ref{ex:pb1} and \ref{ex:p1p2}).

For the construction, we first classified the cyclotomic cosets according to their sizes, 
and then we took the defining set as the union of the first half of cyclotomic cosets in each class 
(sorted in ascending order by their coset leaders). 
In this paper, we focused on 
 the case when there are exactly two kinds of cyclotomic cosets.  
According to our Magma experimental data, when there are more than two kinds of cyclotomic cosets, 
there are still some good codes under the above construction. 

We emphasize that the codes we constructed are not always BCH constacyclic codes.  
The other fact we want to emphasize is that if 
 $N$ is the number of cyclotomic cosets with maximum size, then
in the above construction, the first $\lfloor N/2 \rfloor$ consecutive elements
may be not always in the defining set (see Example~\ref{ex:deltaN1Nm}). 
This might deserve further investigation. 

For the specific codes presented in Sections~\ref{sec:qp}, 
\ref{sec:pbq} and \ref{sec:p1p2}, except the codes in Theorem~\ref{thm:pb3}, 
the others have been proved to be monomially equivalent to some cyclic codes over $\F_q$. 
By looking at the defining sets of these cyclic codes, we haven't found any simple rule which can give an infinite family of cyclic codes. 

We also remark that for the second family of constacyclic codes in Theorems~\ref{thm:qp}, \ref{thm:pb1}, \ref{thm:pb2} and \ref{thm:p1p2}, if furthermore $r \le 2$, then 
by Theorem~\ref{thm:NlNm2}, one can get a better lower bound on their minimum distances.

\end{document}